\newcommand{\wt}{\widetilde{\mathbf{w}}}
\newcommand{\x}{\mathbf{x}}
\newtheorem{thm}{Theorem}
\def\argmin{\operatornamewithlimits{arg\,min}}
\newcommand{\bE}{\mathds{E}}
\newcommand{\ol}{\overline}
\renewcommand{\wt}{\widetilde}
\definecolor{lime}{HTML}{A6CE39}
\DeclareRobustCommand{\orcidicon}{
\begin{tikzpicture}
\draw[lime, fill=lime] (0,0)
circle[radius=0.16]
node[white]{{\fontfamily{qag}\selectfont \tiny \.{I}D}};
\end{tikzpicture}
\hspace{-2mm}
}
\xdef\csname orcid\x\endcsname{\noexpand\href{https://orcid.org/\csname orcidauthor\x\endcsname}{\noexpand\orcidicon}}
\def\BibTeX{{\rm B\kern-.05em{\sc i\kern-.025em b}\kern-.08em
    T\kern-.1667em\lower.7ex\hbox{E}\kern-.125emX}}
\begin{document}

\title{Instant AoI Optimization through Relay Location Selection in Disaster Multi-hop Communication}

\author{\IEEEauthorblockN{Yang Gao\hspace{-1.5mm}\orcidA{},
Zezhi Zeng\hspace{-1.5mm}\orcidB{}, Xiaoli Qin, Yifan Xu\hspace{-1.5mm}\orcidC{},
Haotian Wang\hspace{-1.5mm}\orcidD{}, and Jun Tao\hspace{-1.5mm}\orcidE{}}
\vspace{-20pt}
\thanks{Manuscript received XX XX, 2024.
This work was supported
in part by the CERNET Southeastern China (North) Regional Network Center,
in part by the Key Laboratory of Internet Information Retrieval of Hainan Province Research Fund under Grant 2023KY01,
in part by the Startup Fund of Hainan University under Grant KYQD(ZR)23058,
and in part by the Hainan Provincial Natural Science Foundation of China No.622RC617.
({\it Corresponding author: Yang Gao.})}
\IEEEcompsocitemizethanks{
\IEEEcompsocthanksitem{Yang Gao is
with the School of Cyberspace Security, Hainan University, Haikou, Hainan 570228, China,
and the Key Laboratory of CNII, MOE, China
(e-mail: yanggao@hainanu.edu.cn).}
\IEEEcompsocthanksitem{Zezhi Zeng and Xiaoli Qin are
with the School of Cyberspace Security, Hainan University, Haikou, Hainan 570228, China
(e-mail: \{zzz918, xiaoliqin\}@hainanu.edu.cn).}
\IEEEcompsocthanksitem{Yifan Xu, Haotian Wang and Jun Tao are
with the School of Cyber Science and Engineering, Southeast University, Nanjing, Jiangsu 211189, China
(e-mail: \{xyf, haotianwang, juntao\}@seu.edu.cn).}
\IEEEcompsocthanksitem{Jun Tao is also
with the Key Laboratory of CNII, MOE, China,
and the Purple Mountain Laboratories, Nanjing, Jiangsu, China.}
}
}

\maketitle
\begin{abstract}
Meteorological disasters such as typhoons, forest fires, and floods can damage the communication infrastructures,
which will further disable the communication capabilities of cellular networks.
The multi-hop wireless communication based on IoT devices (e.g., rescue robots, UAVs, and mobile devices) becomes an available and rapidly deployable communication approach for search and rescue operations.
However, Age of Information (AoI), an emerging network performance metric, has not been comprehensively investigated in this multi-hop model.
In this paper, we first construct a UAV-relayed wireless network model and formulate the end-to-end instant AoI.
Then we derive the optimal location of the relay UAV to achieve the minimum instant AoI
by mathematical analysis.
Simulations show that the derived relay location can always guarantee the optimal AoI and outperform other schemes.
\end{abstract}

\begin{IEEEkeywords}
Age of Information, Internet of Things, Multi-hop Communication
\end{IEEEkeywords}
\section{Introduction}
According to United Nations' reports~\cite{disasterD},
there were more than $11,000$ reported disasters globally, with over two million deaths and \$$3.64$ trillion in losses.
Moreover, meteorological disasters such as typhoons, forest fires, and floods 
even damage the cellular network infrastructures (e.g., base stations)
and disable their capabilities,
which hinders the timely arrival of relief supplies or Search and Rescue operations.
For example, more than $12,500$ base stations in Hainan were disabled by Super Typhoon Yagi in $2024$~\cite{typhoonYagi}.
In the infrastructure-less scenarios, the multi-hop wireless communication can be regarded as an available and valuable solution,
considering that many destination nodes may be located beyond the direct wireless communication coverages of source nodes.

Age of Information (AoI)~\cite{Kaul2012, Yao2023} has been emerging as the most suitable performance metric
for measuring information freshness from the standpoint of Internet of Thing (IoT) devices.
Defined as the time elapsed since the generation time of the latest arrival packet at a destination node,
AoI characterizes the level of timely information delivery at the receiver side.
In contrast to conventional measures such as delay and throughput,
which capture the effectiveness of data collection and transmission in a network as a whole,
AoI aims to quantify the timeliness of updates from a destination's perspective.
{\it However, the AoI of multi-hop communication has not been investigated comprehensively.}

This paper focuses on the AoI optimization problem in a post-disaster multi-hop wireless communication model,
where the critical real-time data (e.g., videos, photos) needs to be transmitted from source to destination by communication relays.
In comparison with ground wireless relays (e.g., smartphones, laptops),
UAV-enabled communication relays: (1) can offer better wireless channels with line-of-sight (LoS) mode,
(2) are of fast deployment ability and flexible movability~\cite{Han23IEsJ}.
Since the locations of relays also affect the communication channel gain,
it is necessary to optimize the locations of UAV-enabled relays to maximize the end-to-end instant AoI.
%
The main contributions of this paper are as follows:
\begin{itemize}
    \item We construct the multi-hop wireless communication model, and formulate AoI by the calculus methodology.
    \item We derive the analytical form of AoI with two situations separately, and obtain the minimum AoI and its corresponding relay location.
    \item We conduct the extensive simulations, which show that the theoretical analysis conforms to experimental results.
\end{itemize}

The rest of the paper is organized as follows. Section \ref{sec:Related} reviews related work. Section \ref{sec:model} introduces the system model. Section \ref{sec:analysis} analyzes the system. Simulations are conducted in Section \ref{sec:Simulation} and Section \ref{sec:conclusion} concludes the paper.
\begin{figure*}
	\centering
	\includegraphics[width=0.65\textwidth]{./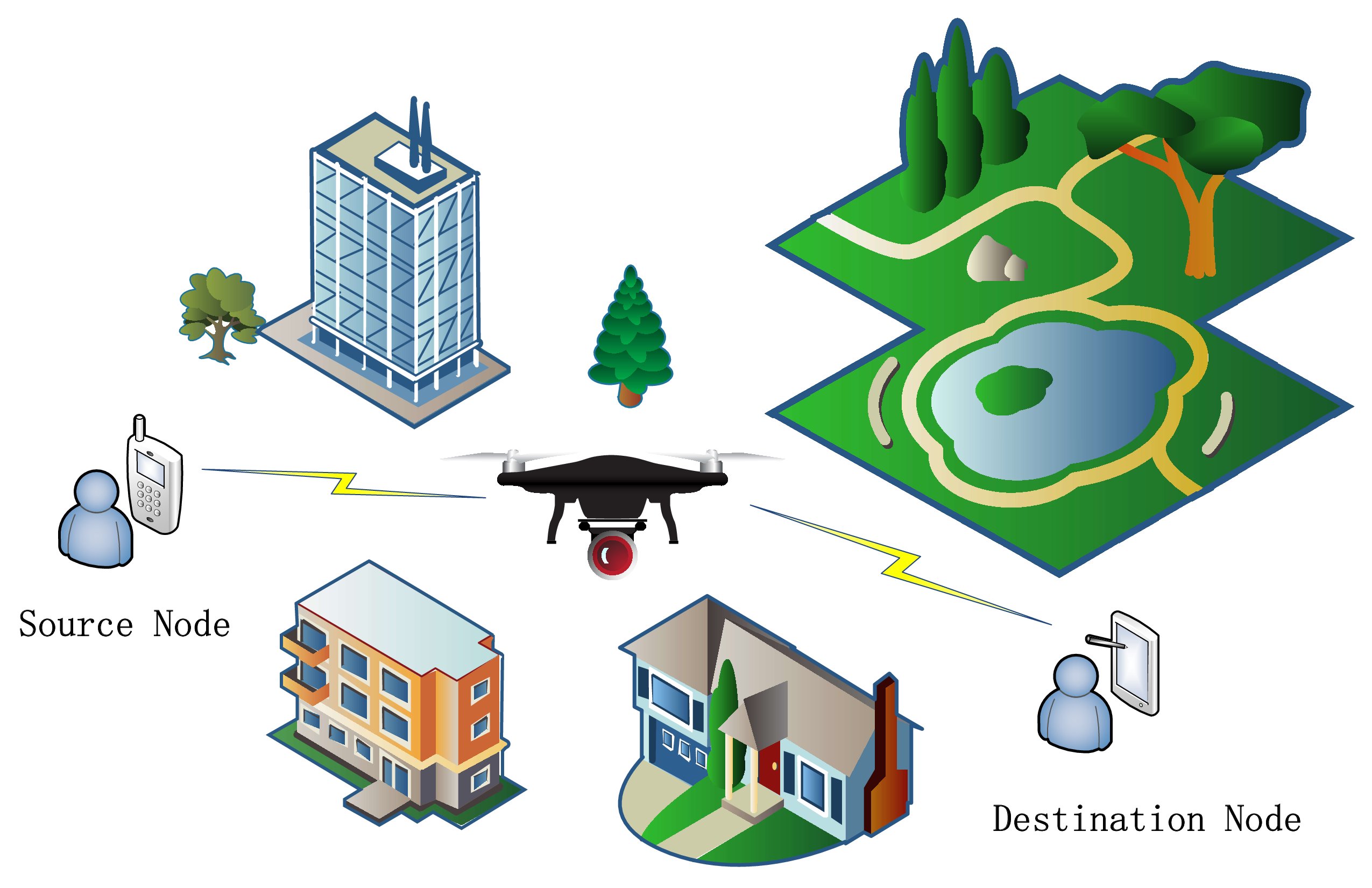}
	\caption{System model for multi-hop wireless communication}
	\label{fig:system}
    \vspace{-10pt}
\end{figure*}
\section{Related Work}\label{sec:Related}
This section provides an overview of the related works about IoTs, multi-hop communication, and AoI, respectively.

\subsection{Internet of Things (IoTs)}
The manufactured technologies of the IoT industry are becoming ubiquitous in modern civilization. The new developments in electronics and communication technology (\emph{RF semiconductors, system-on-chip microcontrollers, computing processors miniaturization}) enable flexible conceptualization and rapid implementation of numerous IoT applications. Additionally, the proliferation of IoT devices in recent years, ranging from body sensors and wearable devices to home appliances and industrial monitoring sensors, has driven the rapid evolution of various computing paradigms, such as pervasive and ubiquitous computing, mobile crowd sensing~\cite{Feng2021}, edge computing~\cite{Jing2023}, autonomous vehicles~\cite{Figlarz2022}, smart cities~\cite{Pati2024}, domotics, industrial IoT~\cite{Krishnan2022}, e-agriculture, advanced tracking and surveillance~\cite{Bera2022}, and intelligent public transportation. These technologies are now omnipresent across all sectors. The resulting networking and computing paradigms, applications, and services have expedited the rise of the big data era and significantly changed our daily lives.

\subsection{Multi-hop Communication}
If the source and the destination are far away from each other
or the channel between them is too weak for direct transmission,
multi-hop communication, where relays forward the source's data packets to the destination can be used.
DTNs are a typical application that relies on the movement of nodes to forward messages based on the routing mode of ``\emph{store-carry-forward}''~\cite{Fall2008}. Thus, DTNs are widely used in disaster relief and military communication networks~\cite{Stute2022, Stocchero2023}.~\cite{Peer2020} proposes a novel multi-hop D2D framework where user-BS (source-destination) pairing is optimized jointly with routing and scheduling to maximize the number of covered users in the disaster-affected area within a given time frame. 
~\cite{Ghosh2021} investigates a UAV-based cognitive hybrid multi-hop D2D communication using a clustering technique in a downlink scenario. The UAV can communicate through a multi-hop D2D network to reach the maximum number of victims.~\cite{Ma2022} presents an adaptive hybrid computation and multi-hop communication strategy to take full advantage of sensor processing capabilities and improve energy efficiency at the link level in the railway disaster wireless monitoring system, thereby improving the valid lifetime accordingly. In short, research on multi-hop communication routing is of great significance because it can improve the reliability and accessibility of data transmission in communication scenarios facing unstable, high-latency, or interrupted network connections.
\subsection{Age of Information (AoI)}
Extensive work has dealt with the theoretical foundations
of AoI, including packet generation rate control~\cite{Kaul2012, Sun2016}, 
queue management~\cite{Pappas2015}, and scheduling policy~\cite{Sastry2018}. Meanwhile, some practical settings have been taken into consideration when exploring AoI in real-world applications. They considered practical constraints including throughput~\cite{Kadota2018}, interference~\cite{Talak2020}, and channel access~\cite{Yates2017, Kuo2019}. Moreover,~\cite{Tripathi2023} has studied AoI under the setting where a mobile agent traverses ground terminals for data collection, whereas ~\cite{Yin2019} considered the request and response behaviors between users and the data distributor. Recent work extended AoI into IoTs and network edge, with~\cite{Zhou2018} aiming to minimize AoI under an average energy cost constraint at the IoT device and~\cite{Chengzhang2019} pursuing a general AoI model for the sampling behaviors at the network edge. Although existing related work has expanded the theoretical exploration of AoI into practical applications to some extent,
many important settings or application scenarios in IoTs remain insufficiently addressed.

This paper focuses on applying AoI to performance optimization in multi-hop relay communication, distinguishing itself from numerous previous research that solely addresses data delay and throughput, which reflect the effectiveness of data collection and transmission across the network. This is crucial for disaster rescue operations and saving lives.
\section{System Modeling}\label{sec:model}

\subsection{Communication Model}
Fig.~\ref{fig:system} illustrates a typical multi-hop communication model in a disaster, consisting of two nodes
(i.e., \emph{source node} and \emph{destination node}) and a UAV-enabled relay node.
Data packets can be continuously transmitted from the source node to the destination node
through this relay node.

Due to the similar features of the rescue equipment and the wide range of applications, we use a UAV as a typical relay node. Since the UAV is flying at a low altitude, we assume that all nodes are in horizontal communication mode. The UAV discussed in this paper is equipped with a wireless transceiver as another equipment, enabling communication with the nodes. The link rate between the UAV and a node adheres to
the principles of the Shannon-Hartley theorem~\cite{Shannon1948}. A concise formulation of this theorem is elucidated below:
\begin{equation}
		\begin{aligned}
			\label{eq1}
			C = B\log_2 (1+\gamma),
		\end{aligned}
\end{equation}
where $C$ denotes the link rate and $B$ denotes the communication bandwidth. Also, $\gamma$ stands for the signal-noise ratio (SNR). In this paper, the allocated bandwidth $B$ is fixed. Therefore, the link rate varies according to the SNR. The received signal power is denoted as $P_r$, which is determined by the Friis transmission equation \cite{Friis1946}, expressed as follows:
\begin{equation}
		\begin{aligned}
			\label{eq2}
			P_r=\left(\frac{\lambda}{4\pi d}\right)^2 \cdot G_tG_rP_t,
		\end{aligned}
\end{equation}
where $\lambda$ denotes the communication wavelength, $d$ denotes the transmission distance, $G_t$ and $G_r$ respectively denote antenna gain of transmitter and receiver, and $P_t$ denotes the transmission signal power. At this time, SNR is the received signal power divided by the noise power. Therefore, due to (\ref{eq2}), the SNR ($\lambda$) is defined as follows:
\begin{equation}
		\begin{aligned}
			\label{eq3}
			\gamma=\left(\frac{\lambda}{4\pi d}\right)^2 \cdot \left(\frac{G_tG_rP_t}{N}\right),
		\end{aligned}
\end{equation}
where $N$ denotes the noise power in the free space. Due to (\ref{eq3}), the link rate $C$ is greatly influenced by the transmission distance $d$.
\subsection{Network Delay}
Network delay typically refers to the time it takes for a data packet to travel from the source node to the destination node. The formula for network delay can vary based on the network environment and application scenario, but it generally includes key parameters such as transmission delay, queuing delay, processing delay, and propagation delay.
A common formula for network delay is:
\begin{equation}
		\begin{aligned}
			\label{delay}
			 D = D_t + D_q + D_p + D_{tr},
		\end{aligned}
\end{equation}
where $D$ denotes the total network delay.
$D_t$ is the transmission delay, which is the time required to send the data packet over the link. It is calculated as $D_t = \frac{L}{C}$, where $L$ is the length of the packet (\emph{in bits}) and $C$ is the transmission rate (\emph{in bits per second}). $D_q$ is the queuing delay, which is the time the packet spends waiting in the router or switch queue before being processed. This depends on network traffic and the processing capacity of the devices.
$D_p$ is the processing delay, which is the time required by the router or switch to process the packet, including checking the packet header and determining the forwarding path. $D_{tr}$ is the propagation delay, which is the time it takes for the packet to propagate through the physical medium. It is calculated as $D_{tr} = \frac{d}{s}$, where $d$ is the distance of the link (\emph{in meters}) and $s$ is the propagation speed of the signal in the medium (\emph{typically close to the speed of light, approximately $3 \times 10^8$ meters/second}).

In this paper, the data packets we need to transmit are relatively small. We assume that the packets do not need to queue at the nodes, and both processing delay and propagation delay can be ignored. Therefore, the total network delay is equal to the transmission delay, i.e., $D = D_t$.

\subsection{AoI Formulation}
In many cases, outdated
information can lead to inaccurate decision-making, while
timely information can provide superior support and guidance, especially in emergencies. Consequently, ensuring the freshness of information is of paramount importance in numerous domains.
\begin{figure}
	\centering
	\includegraphics[width=0.95\columnwidth]{./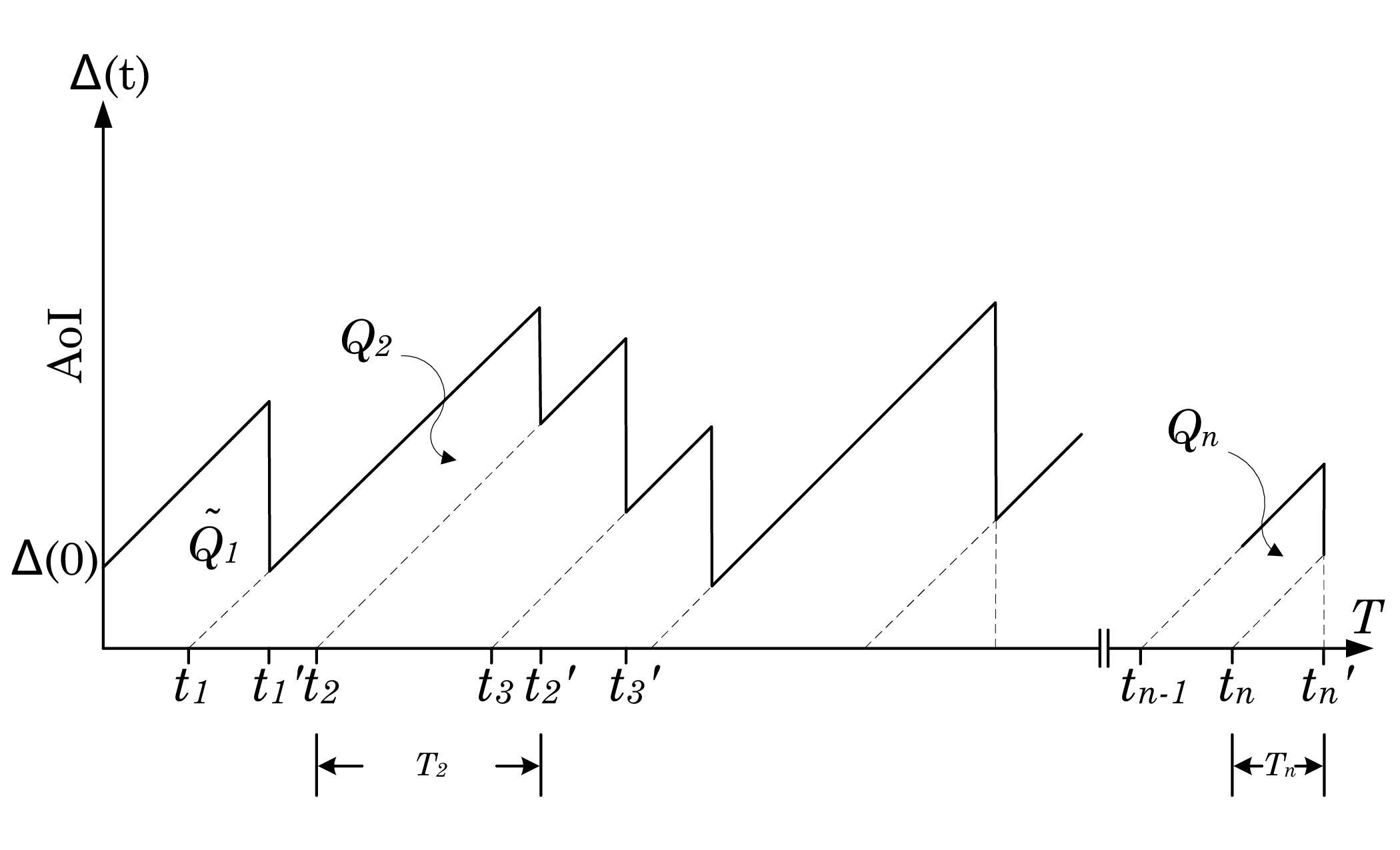}
	\caption{An example of varying AoI at the receiving end}
	\label{fig:aoi}
    \vspace{-5pt}
\end{figure}

In this paper, information freshness is also the primary metric we aim to achieve.
To measure data freshness and capture the latest updates, we adopt the recently proposed Age of Information (AoI) metric, which evaluates data freshness at data nodes. As depicted in Fig.~\ref{fig:aoi}, a sample variation of age $\Delta(t)$ as a function of time $t$ at the receiving end is shown. A concise formulation of AoI is explained below:
\begin{equation}
		\begin{aligned}
			\label{aoi}
			 \Delta(t)=t-\mu(t),
		\end{aligned}
\end{equation}
where $\mu(t)$ and $\Delta(t)$ denote the generation time of the latest packet from a source node and its AoI at the destination node at time $t$, respectively.
Over an interval $(0,T)$, the average AoI (\cite{Kaul2012}) can be expressed as
\begin{equation}
		\begin{aligned}
  \Delta(T)=\frac{1}{T}\int_{0}^{T} \Delta(t) dt.
        \label{aoi_integral}
		\end{aligned}
\end{equation}

The first status update is generated at $t_1$, followed by updates at $t_2$, $t_3$, $\dots$, $t_n$. The age at the receiving end increases linearly over time in the absence of any updates and is reset to a smaller value when an update is received. Update $i$, generated at time $t_i$ , finishes service and is received by the receiving end at time $t_i'$ . At $t_i'$, the age $\Delta(t_i')$ at the receiving end is reset to the age $T_i = t_i'-t_i$ of the received status update. The age $T_i$ is also the system time of the update packet $i$ and is the sum of the time the packet waited in the queue and the time it spent in service. Thus the age function $\Delta(t)$ exhibits the sawtooth pattern shown in Fig.~\ref{fig:aoi}. The time average age of the status updates is the area under the sawtooth function in Fig.~\ref{fig:aoi} normalized by the time interval of observation.

For simplicity of exposition, the length of the observation interval is chosen to be $T = t_n'$, as depicted in Fig.~\ref{fig:aoi}. We decompose the area defined by the integral in~(\ref{aoi_integral}) into a sum of disjoint geometric parts. Starting from $t = 0$, the area can be viewed as the concatenation of the polygon area $\wt{Q}_1$, the trapezoids $Q_i$ for $i \geq 2$ (\emph{with $Q_2$ and $Q_n$ highlighted in the figure}), and the triangular area of width $T_n$ over the time interval ($t_n$, $t_n'$). With $N(T)$ = $\max\{n | t_n \leq T \}$ denoting the number of arrivals by time $T$, this decomposition yields: 
\begin{equation}
		\begin{aligned}
  \Delta(T)=\frac{\wt{Q}_1+T_n^2/2+\sum_{i=2}^{N(T)}Q_i}{T}. 
        \label{aoi_1}
		\end{aligned}
\end{equation}
\section {Mathematical Analysis}\label{sec:analysis}
\begin{figure}
	\centering
	\includegraphics[width=0.8\columnwidth]{./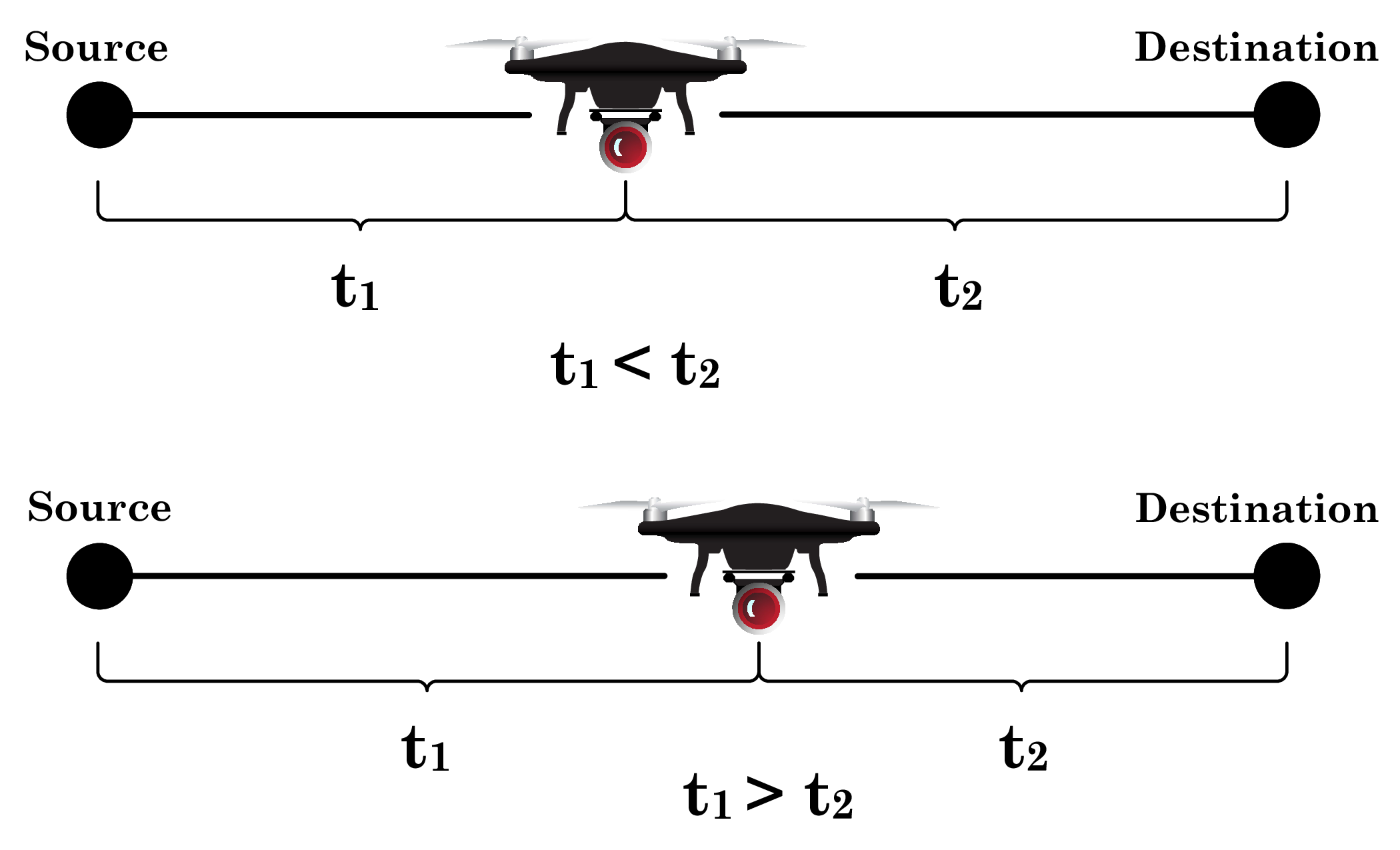}
	\caption{Typical System Model for multi-hop Communication.}
	\label{fig:System_Model}
\end{figure}
In this section, we provide the mathematical analysis in the areas of network delay, and AoI.
We generalize the typical multi-hop relay communication model, as shown in Fig.\ref{fig:System_Model}.
Assume that there are $n$ data packets continuously sent from source to destination via the UAV-enabled relay.
Each packet needs to wait for the previous data packet to be sent before it can be transmitted.
Here the relay can continue to forward packets without waiting, operating in full duplex mode.

\subsection{Network Delay}
First, we try to analyze the network delay under different conditions,
according to the moment the first packet reaches the destination node,
as shown in Fig.~\ref{fig:1st}.
Here $t_1$ denotes the time taken to transmit a packet from the source to the relay,
and $t_2$ denotes the time taken to transmit a packet from the relay to the destination.
\begin{figure}
	\centering
	\includegraphics[width=0.9\columnwidth]{./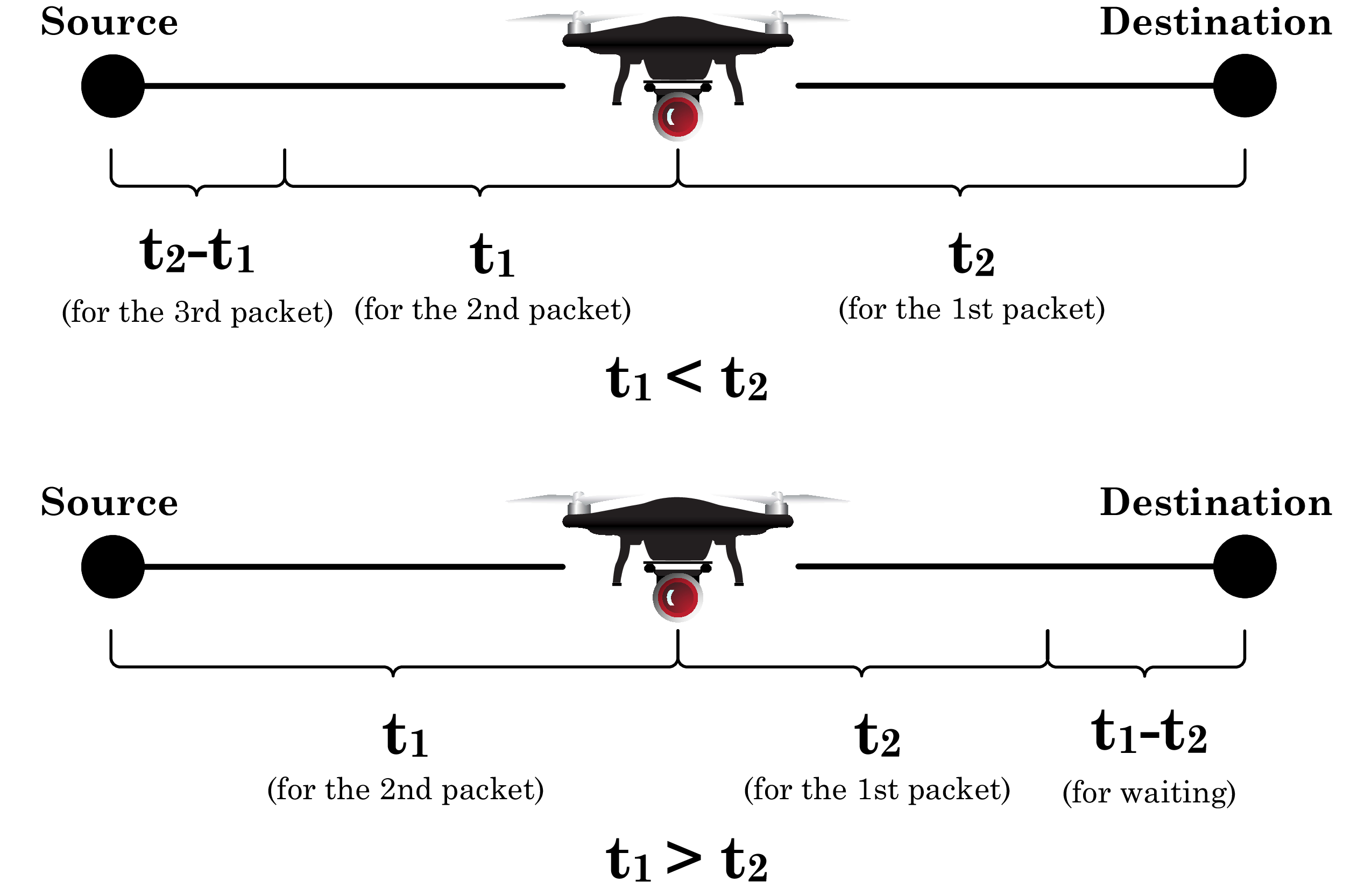}
	\caption{The 1st packet reaches the destination node.}
	\label{fig:1st}
\end{figure}
\subsubsection{Case $1$: $t_1<t_2$}
In the case where $t_1<t_2$, after the first packet arrives at the destination node at time ``\emph{$t_1+t_2$}'', the second packet is already waiting at the relay. This is because when the first packet starts transferring from the relay to the destination node at time ``\emph{$t_1$}'', the second packet begins transmission. Given that $t_1<t_2$, the next packet will arrive at its next node (\emph{relay node}) faster than the previous one. Following this logic, the average network delay can be:
\begin{equation}
\begin{aligned}
    \ol{D}_{t_1<t_2} &= \lim_{n\rightarrow\infty}\frac{t_1+t_2+(n-1)t_2}{n}\\
    &= \lim_{n\rightarrow\infty}\frac{t_1+nt_2}{n}\\
    &= t_2.
    \label{delay_1}
\end{aligned}
\end{equation}

\subsubsection{Case $2$: $t_1>t_2$}
In the case where $t_1>t_2$, the source node continuously sends packets one by one. Once a packet arrives at the relay, it will be transmitted to the destination node without any waiting time, until the last packet is transmitted. This is because the previous packet always finishes its transmission before the next packet is ready at the relay (\emph{$t_1>t_2$}).
Following this similar analysis, we can also derive the average network delay for {\it Case $2$}, i.e.,
\begin{equation}
\begin{aligned}
    \ol{D}_{t_1>t_2} &= \lim_{n\rightarrow\infty}\frac{nt_1+t_2}{n} = t_1. 
    \label{delay_2}
\end{aligned}
\end{equation}

Our goal is to minimize network delay, i.e., $\min \ol{D}=\min\{\max \{t_1,t_2 \}\}$.
It is obvious that $min\ol{D}=t_1=t_2$ is obtained only when $t_1=t_2$. 

\subsection{Instant AoI}
To measure the freshness of data and capture the latest data updates,
we adopt the average instantaneous AoI at the destination node at time $t$.
This paper assumes that data packets are continuously sent from the source node starting at time zero,
where AoI follows a repetitive change pattern.
Unlike the classical calculation method, we use geometric methods combined with definite integrals for the calculation. 

In the case where $t_1<t_2$, a sample variation of AoI is shown in Fig.~\ref{fig:AoI_t2}.
Based on the definite integrals, the average AoI can be calculated as follows.
\begin{equation}
\begin{aligned}
&\!\Delta(T)_{t_1<t_2} \\
=& \frac{1}{T}\int_{0}^{T} \Delta(t) dt \\
=& \lim_{n\rightarrow\infty}\frac{1}{t_1+nt_2}(\int_{0}^{t_1+2t_2} t dt+(n\!-\!2)\!\int_{t_1+2t_2}^{t_1+3t_2} t-t_1 dt\\
&+ t_2(t_2-t_1)+2t_2(t_2-t_1)+\cdots+(n-3)t_2(t_2-t_1))\\
=& \frac{5}{2}t_2+\frac{n}{2}(t_2-t_1).
\label{aoi_2}
\end{aligned}
\end{equation}
\begin{figure}
	\centering
	\includegraphics[width=0.9\columnwidth]{./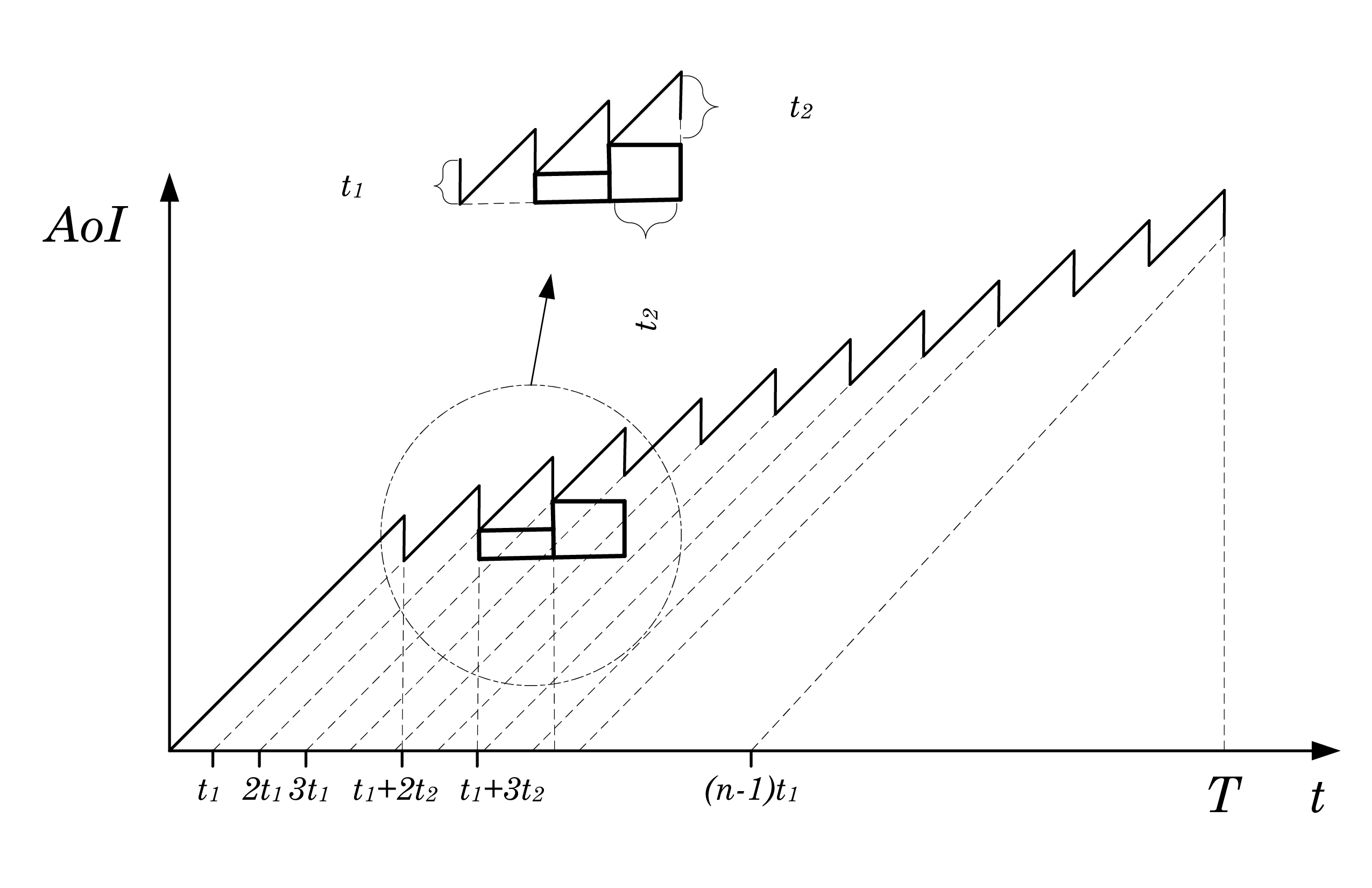}
	\caption{AoI respect to $t_1<t_2$.}
	\label{fig:AoI_t2}
\vspace{-10pt}
\end{figure}

It is important to point out that the first part of the integrals corresponds to integrating the first large triangle in Fig.~\ref{fig:AoI_t2}. The second part of the integrals integrates the ``\emph{$n-2$}'' trapezoids that represent the growth amounts in Fig.~\ref{fig:AoI_t2}. The third part of the integral accumulates the ``\emph{$n-3$}'' growth amounts, and the difference between the next growth amount and the previous one is $t_2(t_2-t_1)$, depicted as the rectangle in Fig.~\ref{fig:AoI_t2}, making this an increasing function.

In the case where $t_1>t_2$, a sample variation of AoI is shown in Fig.~\ref{fig:AoI_t1}. The average AoI, using definite integrals, can be calculated as follows:
\begin{equation}
\begin{aligned}
&\!\Delta(T)_{t_1>t_2}\\
=& \frac{1}{T}\int_{0}^{T} \Delta(t) dt \\
=& \lim_{n\rightarrow\infty}\frac{1}{nt_1+t_2}(\int_{0}^{2t_1+t_2} t dt+(n\!-\!2)\!\int_{2t_1+t_2}^{3t_1+t_2} t-t_1 dt)\\
=& \frac{3}{2}t_1+t_2,
\label{aoi_3}
\end{aligned}
\end{equation}
similarly to case one, the first part of the integrals integrates the first large triangle in Fig.~\ref{fig:AoI_t1}. The second part of the integrals integrates the ``\emph{$n-2$}'' trapezoids of the same shape.

The main theorem in this paper can be stated as follows.
\begin{thm}[$t_1=t_2\in\argmin_{t_1,t_2} \Delta(T)$, concerning $P_{node}<P_{relay}$, under the same noise $N_1=N_2=N$, where $N_1$ is the noise between the source and the relay, and $N_2$ is the noise between the relay and the destination]
\label{thm:periodic}
When the transmission signal power of the relay is greater than that of the user nodes (i.e, source node, destination node), the average instantaneous AoI of the receiver reaches its minimum when $t_1=t_2$, is
\begin{equation}
		\begin{aligned}
         \min \Delta(T) = \frac{5}{2}t_1= \frac{5}{2}t_2.
        \label{aoi_4}
		\end{aligned}
\end{equation}
\end{thm}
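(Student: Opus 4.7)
My plan is to leverage the explicit piecewise formulas (12)--(13) for $\Delta(T)$ and reduce the theorem to a short monotonicity argument on the relay position. I would first rule out Case 1: the summand $\tfrac{n}{2}(t_2-t_1)$ in (12) diverges as $n\to\infty$ whenever $t_2>t_1$, so any placement yielding $t_1<t_2$ produces an unbounded AoI and cannot be optimal. Restricting to $t_1\ge t_2$, formula (13) applies and gives $\Delta(T)=\tfrac{3}{2}t_1+t_2$; evaluating on the boundary $t_1=t_2=t^*$ returns the conjectured optimum $\tfrac{5}{2}t^*$, so all that remains is to show the boundary is the actual minimizer.

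Next I would parameterize the achievable $(t_1,t_2)$ pairs by the relay position. Chaining (1)--(3) with $N_1=N_2=N$ yields $t_i=L/\{B\log_2(1+\alpha P_i/d_i^2)\}$ where $\alpha=(\lambda/4\pi)^2 G_tG_r/N$, $P_1=P_{node}$, and $P_2=P_{relay}$. Placing the relay collinearly with $d_1+d_2=D$ makes $t_1$ strictly increasing and $t_2$ strictly decreasing in $d_1$, so there is a unique balanced location $d_1^*$ at which $t_1=t_2=t^*$; the matching condition $P_{node}/(d_1^*)^2=P_{relay}/(d_2^*)^2$ then forces $d_2^*/d_1^*=\sqrt{P_{relay}/P_{node}}>1$ by the power hypothesis. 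Differentiating via the chain rule gives $dt_i/dd_i=2t_i\gamma_i/\{d_i(1+\gamma_i)\ln(1+\gamma_i)\}$, and using $dd_2/dd_1=-1$ the derivative $d\Delta/dd_1$ evaluated at $d_1^*$ simplifies to a positive multiple of $3/d_1^*-2/d_2^*$, which is strictly positive since $d_2^*/d_1^*>1>2/3$. Thus the balanced point is a local minimum of $\Delta$ on the Case 2 side, and the power hypothesis enters exactly through the inequality $\sqrt{P_{relay}/P_{node}}>2/3$.

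The main obstacle, and the hardest step, is promoting this local conclusion to a global one. As the relay slides toward the destination, $t_2\to 0$ rapidly and the derivative $\tfrac{3}{2}(dt_1/dd_1)+(dt_2/dd_1)$ could in principle change sign before the endpoint. I would handle this by examining the ratio $(dt_1/dd_1)/|dt_2/dd_1|$ along the entire trajectory and showing it remains above $2/3$ whenever $P_{relay}>P_{node}$, using the monotonicity of the map $\gamma\mapsto\gamma/[(1+\gamma)\ln(1+\gamma)]$ together with the fact that $\gamma_1$ decreases and $\gamma_2$ increases with $d_1$. Ruling out interior critical points in this way, combined with the Case 1 exclusion from the first step and the boundary local minimum from the second, yields $\min\Delta(T)=\tfrac{5}{2}t_1=\tfrac{5}{2}t_2$ as claimed.
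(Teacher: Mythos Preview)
Your first two steps---ruling out $t_1<t_2$ via the divergent $\tfrac{n}{2}(t_2-t_1)$ term, and computing $d\Delta/dd_1$ at the balanced point to get a positive multiple of $3/d_1^*-2/d_2^*$---are correct and match what the paper does in spirit. The genuine gap is in your third step. Your plan is to show $F'(l)>0$ on all of $[d_1^*,d)$ by keeping the ratio $(dt_1/dd_1)/|dt_2/dd_1|$ above $2/3$, but this is impossible: as $l\to d$ one has $d_2\to 0$, $\gamma_2\to\infty$, and a direct computation gives $dt_2/dd_2\sim c/\bigl(d_2[\ln(1/d_2)]^2\bigr)\to+\infty$, while $dt_1/dd_1$ stays bounded. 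Hence the ratio you want to bound below by $2/3$ in fact tends to $0$, and $F'(l)\to-\infty$. So $F$ \emph{must} have an interior critical point (a maximum) on $(d_1^*,d)$, and the monotonicity of $\gamma\mapsto\gamma/[(1+\gamma)\ln(1+\gamma)]$ cannot rescue the argument.

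What the paper does instead is a two-part global analysis in its Appendix~A. First it argues, via second-derivative considerations, that on the Case~2 range $F(l)$ is either monotone increasing or increases then decreases (i.e., at most one turning point). Second, it compares the two candidate boundary values: $F(l_{t_1=t_2})=\tfrac{5}{2}t^*$ versus $\lim_{l\to d}F(l)=\tfrac{3}{2}t_1(d)$. Using $l_{t_1=t_2}<d/2$ this comparison reduces to the elementary inequality $(1+4x)^{3/5}>1+x$ with $x=(\lambda/(4\pi d))^2 G_t\overline{G_r}P_{node}/N$, which holds for all sufficiently small $x$ (the physically relevant regime). Note that the paper's global conclusion is therefore conditional on this small-$x$ assumption, not on $P_{relay}>P_{node}$ alone; your outline would need the same endpoint comparison, not a sign argument on $F'$, to close the gap.
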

\begin{proof} 
In the model, the distance $d$ is fixed, and the length of the data packet is $L$. By combining 
\eqref{eq1}, \eqref{eq2}, and \eqref{eq3}, we have: 
\begin{equation}
		\begin{aligned}
			\label{eqn:t1}
			t_1 = \frac{L}{B\log_2(1+(\frac{\lambda}{4\pi})^2\frac{G_t\ol{G_r}}{N}\frac{P_{node}}{d_1^2})},
		\end{aligned}
\end{equation}
\begin{equation}
		\begin{aligned}
			\label{eqn:t2}
			t_2 = \frac{L}{B\log_2(1+(\frac{\lambda}{4\pi})^2\frac{\ol{G_t}G_r}{N}\frac{P_{relay}}{d_2^2})},
		\end{aligned}
\end{equation}
\begin{equation}
		\begin{aligned}
			\label{eqn:d}
			d = d_1+d_2=\frac{\gamma}{4\pi}(\sqrt{\frac{\frac{\ol{G_t}G_rP_{relay}}{N}}{2^{\frac{L}{Bt_2}}-1}}+\sqrt{\frac{\frac{G_t\ol{G_r}P_{node}}{N}}{2^{\frac{L}{Bt_1}}-1}}).
		\end{aligned}
\end{equation}

Where $d_1$ is the distance between the source node and the relay, and $d_2$ is the distance between the relay and the destination node. $G_t$ and $G_r$ respectively denote the antenna gains of the transmitter and receiver at the node, while $\ol{G_t}$ and $\ol{G_r}$ respectively denote the antenna gains of the transmitter and receiver at the relay. We assume that $G_t=G_r$, and $\ol{G_t}=\ol{G_r}$, so $\ol{G_t}G_r=G_t\ol{G_r}$ is always true. When $t_2\uparrow$, $t_1\downarrow$ is always true. When $t_1<t_2$, as $t_2$ increases, (\ref{aoi_2}) increases linearly. Once $t_1=t_2$ (i.e., $\sqrt{\frac{P_{node}}{P_{relay}}}=\frac{d_1}{d_2}$), $\min \Delta(T) = \frac{5}{2}t_2$ can be obtained. When $t_1>t_2$, the changing trends of $t_1$ and $t_2$ are analyzed in Appendix A.
\end{proof}
\begin{figure}
	\centering
	\includegraphics[width=0.9\columnwidth]{./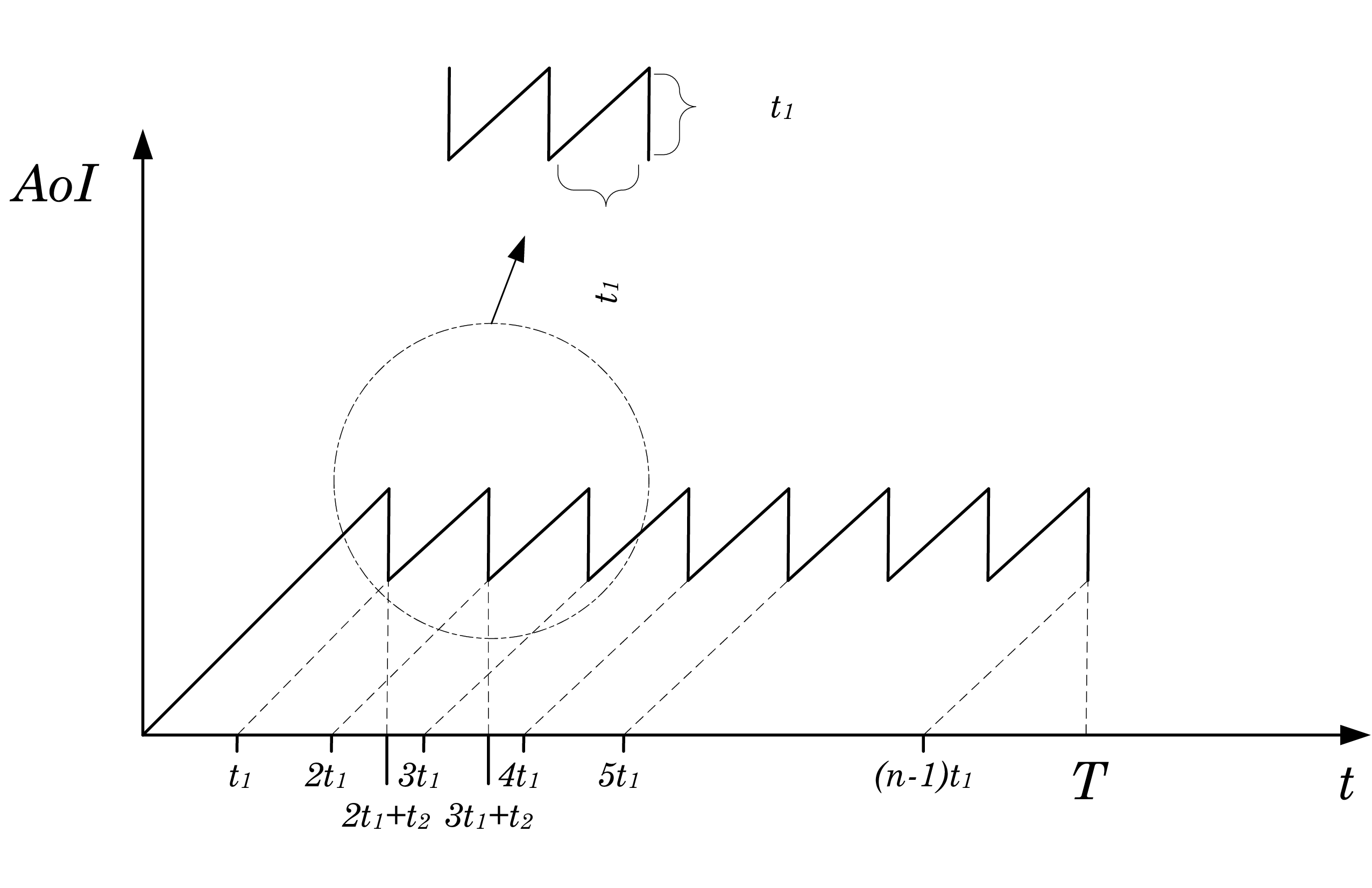}
	\caption{AoI respect to $t_2<t_1$}
	\label{fig:AoI_t1}
\vspace{-10pt}
\end{figure}


\textbf{Remark}: Theorem~\ref{thm:periodic} reveals that when $t_1=t_2$, this point is both the minimum point of average network delay and the local minimum point of instantaneous AoI.
It is also proved that under given conditions ($P_{relay}>P_{node}$, {\it rescue equipment usually has relatively strong communication capabilities, which is reasonable and practical}), it is highly likely to be the global minimum point.
This is of great significance for optimizing the performance of the entire system.
Even if it is not the global minimum point of AoI, the true global minimum point of AoI would deviate from the optimal point of network delay,
thus diminishing its practical significance. Network delay is also a very critical factor in practice that cannot be ignored.

\subsection{Effect of Noise on the Optimal Location}
In this paper, we initially assume that the noise level is uniform across each link,
where the signal intensity of the white noise and the noise power are both set to $1\times 10^{-10}$.
Subsequently, we analyze the scenarios where the noise levels differ between links.
\begin{figure*}
    \centering
    \begin{subfigure}{0.33\textwidth}
    	\centering
    	\includegraphics[width=\linewidth]{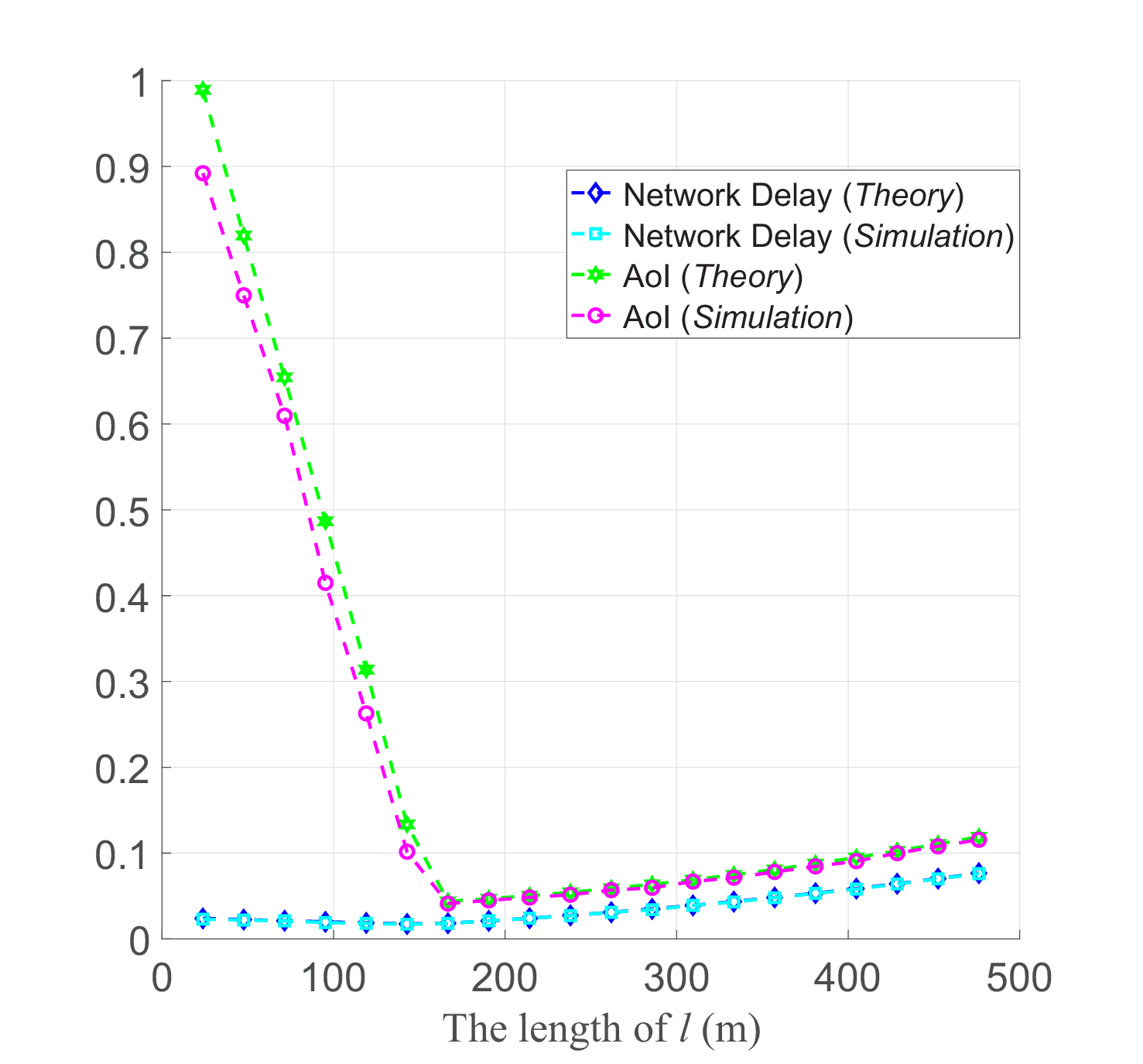}
    	\caption{Respect to $P_{node}=0.1, P_{relay}=0.5, d=500$.}
    	\label{sim:fig5a}
    \end{subfigure}%
    \begin{subfigure}{0.33\textwidth}
    	\centering
    	\includegraphics[width=\linewidth]{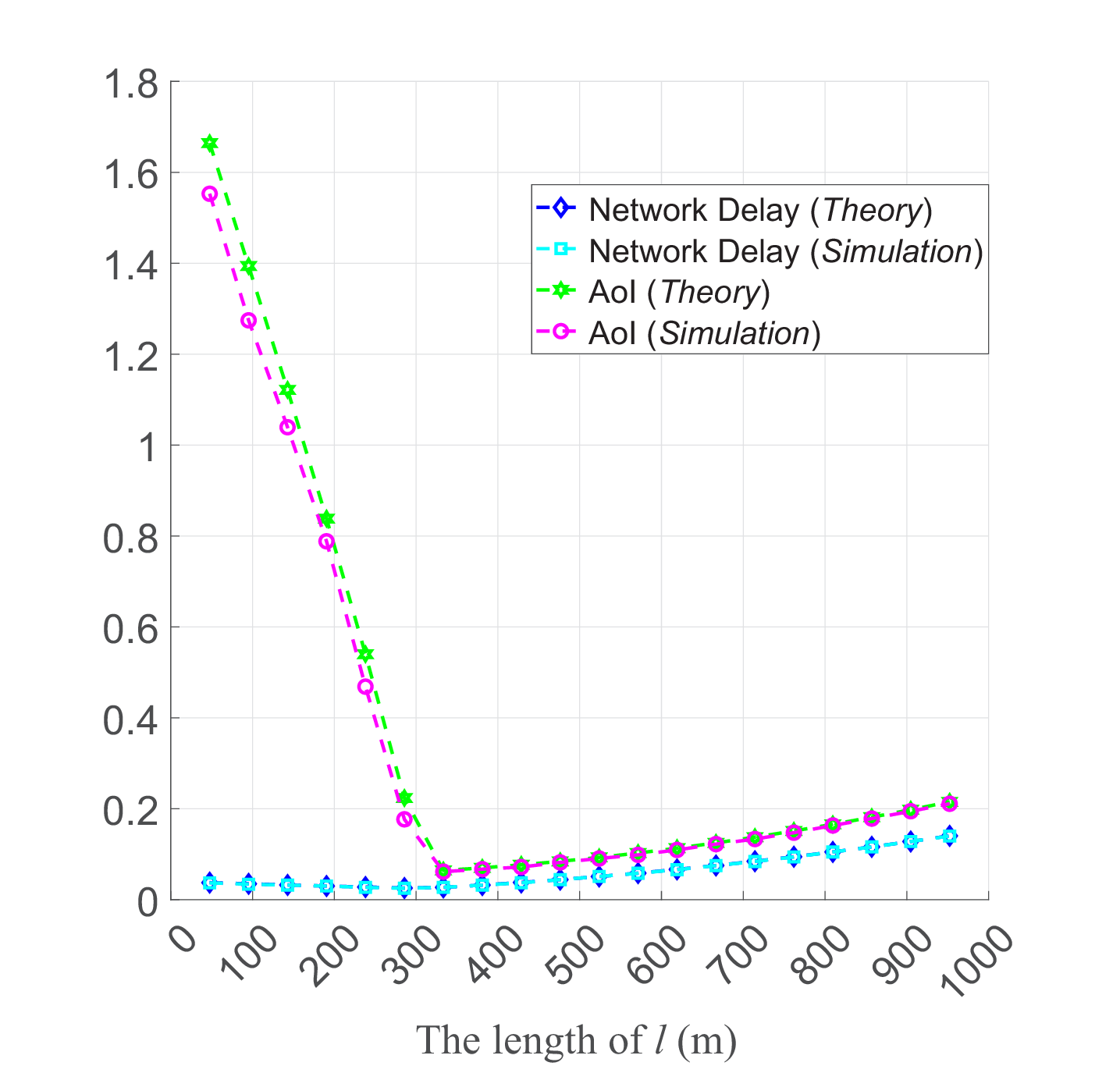}
    	\caption{Respect to $P_{node}=0.2, P_{relay}=1, d=1000$.}
    	\label{sim:fig5b}
    \end{subfigure}%
    \begin{subfigure}{0.33\textwidth}
    	\centering
    	\includegraphics[width=\linewidth]{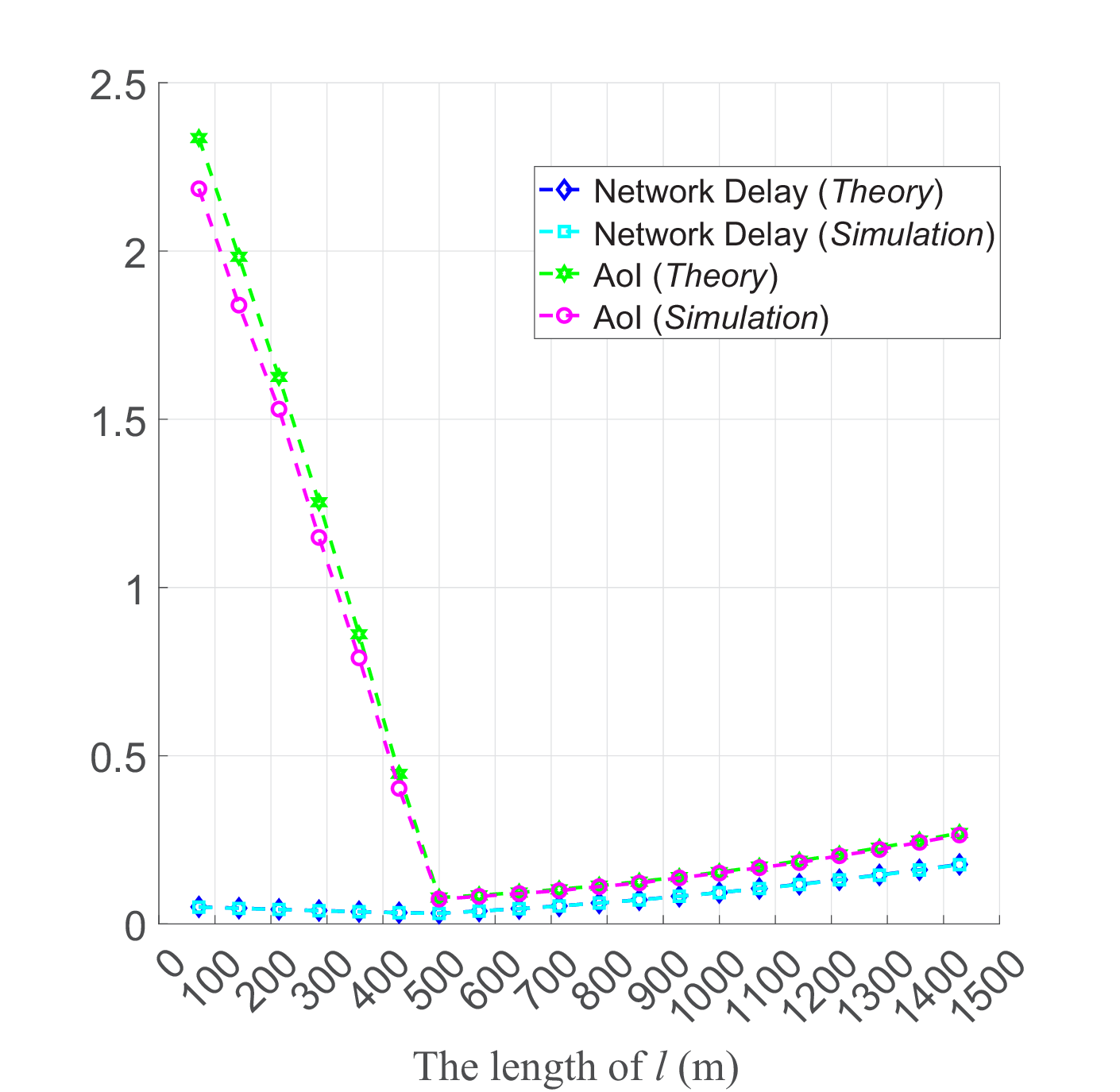}
    	\caption{Respect to $P_{node}=0.35, P_{relay}=1.5, d=1500$.}
    	\label{sim:fig5c}
    \end{subfigure}%
    \caption{Average delay and Instant AoI with varying $l$.}
    \label{sim:aoi1}
\vspace{-10pt}
\end{figure*}

An optimal location for the relay is defined in this paper as the location that minimizes the lower bound of AoI.
Based on our previous derivation, the optimal location is the point that satisfies $t_1=t_2$.
Therefore, the optimal location of the UAV-enabled relay should satisfy:
\begin{equation}
\begin{aligned}
	\frac{L}{B\log_2(1+(\frac{\lambda}{4\pi})^2\frac{G_t\ol{G_r}}{N}\frac{P_{node}}{l^2})}=\\
\frac{L}{B\log_2(1+(\frac{\lambda}{4\pi})^2\frac{\ol{G_t}G_r}{N}\frac{P_{relay}}{(d-l)^2})},
\end{aligned}
\end{equation}
which means that
\begin{equation}
	\begin{aligned}
	\label{eqn:t1=t2:final2}
		\frac{P_{node}}{l^2}= \frac{P_{relay}}{(d-l)^2}.
	\end{aligned}
\end{equation}
Based on Eq. (\ref{eqn:t1=t2:final2}), the optimal location of the relay when $N_1=N_2=N$ can be obtained, i.e.,
\begin{equation}
		\begin{aligned}
		\label{eqn:t1=t2:final3}
			l=\frac{d}{1+\sqrt{\beta}},
		\end{aligned}
\end{equation}
where $\beta=P_{relay}/P_{node}>1$.
Here the optimal location of the relay can minimize the lower bound of AoI.

Moreover, we can derive the optimal location when $N_1 \neq N_2$.
Similarly, the following conditions should be satisfied:
\begin{equation}
	\begin{aligned}
		\frac{L}{B\log_2(1+(\frac{\lambda}{4\pi})^2\frac{G_t\ol{G_r}}{N_1}\frac{P_{node}}{l^2})}= \\
    \frac{L}{B\log_2(1+(\frac{\lambda}{4\pi})^2\frac{\ol{G_t}G_r}{N_2}\frac{P_{relay}}{(d-l)^2})},
	\end{aligned}
\end{equation}
which means that
\begin{equation}
	\begin{aligned}
	\label{eqn:t1=t2:final4}
		\frac{P_{node}}{N_1l^2}= \frac{P_{relay}}{N_2(d-l)^2}.
	\end{aligned}
\end{equation}
Thus we have
\begin{equation}
		\begin{aligned}
		\label{eqn:t1=t2:final5}
			l=\frac{d}{1+\sqrt{\beta \frac{N_1}{N_2}}}.
		\end{aligned}
\end{equation}

In the case of $N_1 > N_2$, compared with the value in (\ref{eqn:noeq3}), $\frac{d}{1+\sqrt{\beta \frac{N_1}{N_2}}}<\frac{d}{1+(\frac{2}{3}\beta \frac{N_1}{N_2})^\frac{1}{3}}$ is always true, and combined with the inference in (\ref{eqn:leq}), the final result is also valid.

In the case of $N_1 < N_2$, $\frac{d}{1+\sqrt{\beta \frac{N_1}{N_2}}}<\frac{d}{1+(\frac{2}{3}\beta \frac{N_1}{N_2})^\frac{1}{3}}$ is not always true,
but holds only under certain restrictions (with appropriate parameters, e.g., $\beta$ as large as possible, and $\frac{N_1}{N_2}$ as close to ``\emph{1}'' as possible),
which we will show in the simulation in the next section.

\section{Performance Evaluation}\label{sec:Simulation}
In this section, we employ simulations to validate the optimization of network delay and AoI by changing the locations of the relays in a multi-hop relay communication model in MATLAB.\footnote{The codes are available at \href{https://github.com/hbkhhh619315/relaylocation}{https://github.com/hbkhhh619315/relaylocation}.}.
\begin{table}
\captionsetup{font=footnotesize}
\caption{Notations and Values}
\centering
\resizebox{\columnwidth}{!}{%
	\begin{tabular}{|c|p{0.2\linewidth}|} 
		\hline
		\textbf{Notations} & \textbf{Values} \\ [0.5ex] 
		\hline 
            The length of data packet (bit) ($L$) & $800000$ \\
        \hline
            The number of data packets & $100$ \\
        \hline
            Bandwidth in megahertz (MHz) ($B$) & $20$ \\
        \hline
			Wavelength in meter (m) ($ \lambda$) & $0.125$ \\
        \hline
		Noise power in Watt (W) ($N$) & $1 \times 10^{-10}$ \\
        \hline
        	Antenna gain of transmitter of nodes ($G_t$) & $0.9$ \\
        \hline
        	Antenna gain of receiver of nodes ($G_r$) & $0.9$ \\
        \hline
        	Antenna gain of transmitter of relay ($\ol{G_t}$) & $1$ \\
        \hline
        	Antenna gain of receiver of relay ($\ol{G_r}$) & $1$ \\
        \hline
\end{tabular}}
\vspace{-10pt}
\label{table:sysParams}
\end{table}
\subsection{Evaluation Environment}
Our simulation setup involves a system comprising two nodes and one relay. The source node transmits $100$ data packets ($800,000$ bits per packet) to the destination node through the relay. The transmission signal power of the nodes belongs to the finite set $\{0.1,0.2,0.35\}$ or the intervals $(0.05,0.35)$, $(0.1,0.7)$, $(0.15,1.05)$, while the transmission signal power of the relay belongs to the finite set $\{0.5,1,1.5\}$. The distance between the two nodes belongs to the finite set $\{500m,1000m,1500m\}$. The length $l$ ranges in $(d/5,t_1=t_2,d/2,4d/5)$. 
In the case of $N_1>N_2$, we set $N_1=2\times 10^{-10}, N_2=1\times 10^{-10}$. In the case of $N_1<N_2$, we set $N_1=1\times 10^{-10}, N_2=2\times 10^{-10}$. The other simulation parameter values were derived from reference~\cite{Horiuchi2016}. 
For more detailed parameter information, please refer to the original paper or consult Table~\ref{table:sysParams}.
\begin{figure*}
    \centering
    \begin{subfigure}{0.33\textwidth}
    	\centering
    	\includegraphics[width=\linewidth]{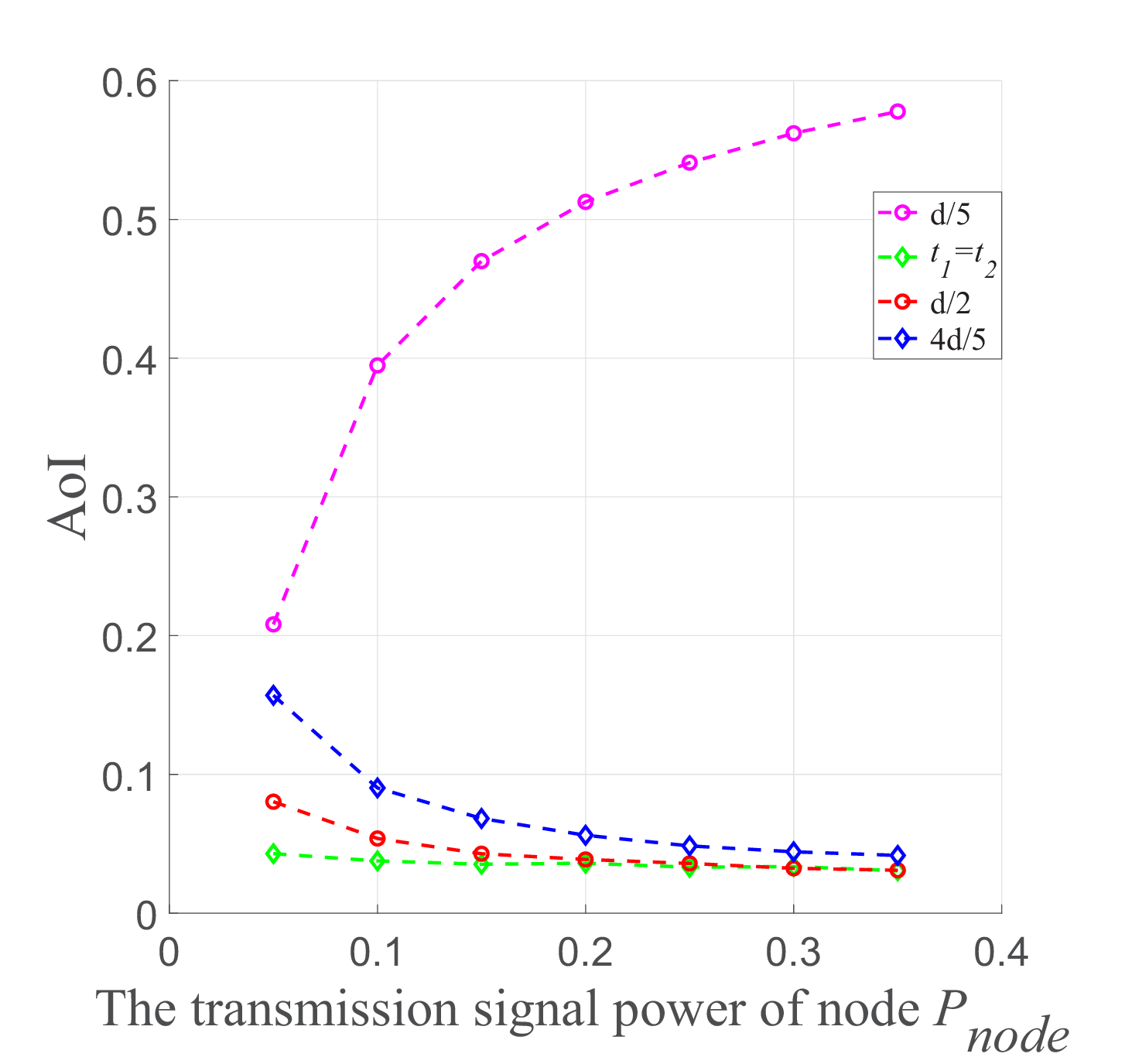}
    	\caption{Respect to different $P_{node}, P_{relay}=0.5, d=500$, $l$ ranges in $(d/5,t_1=t_2,d/2,4d/5)$.}
    	\label{sim:fig6a}
    \end{subfigure}%
    \begin{subfigure}{0.33\textwidth}
    	\centering
    	\includegraphics[width=\linewidth]{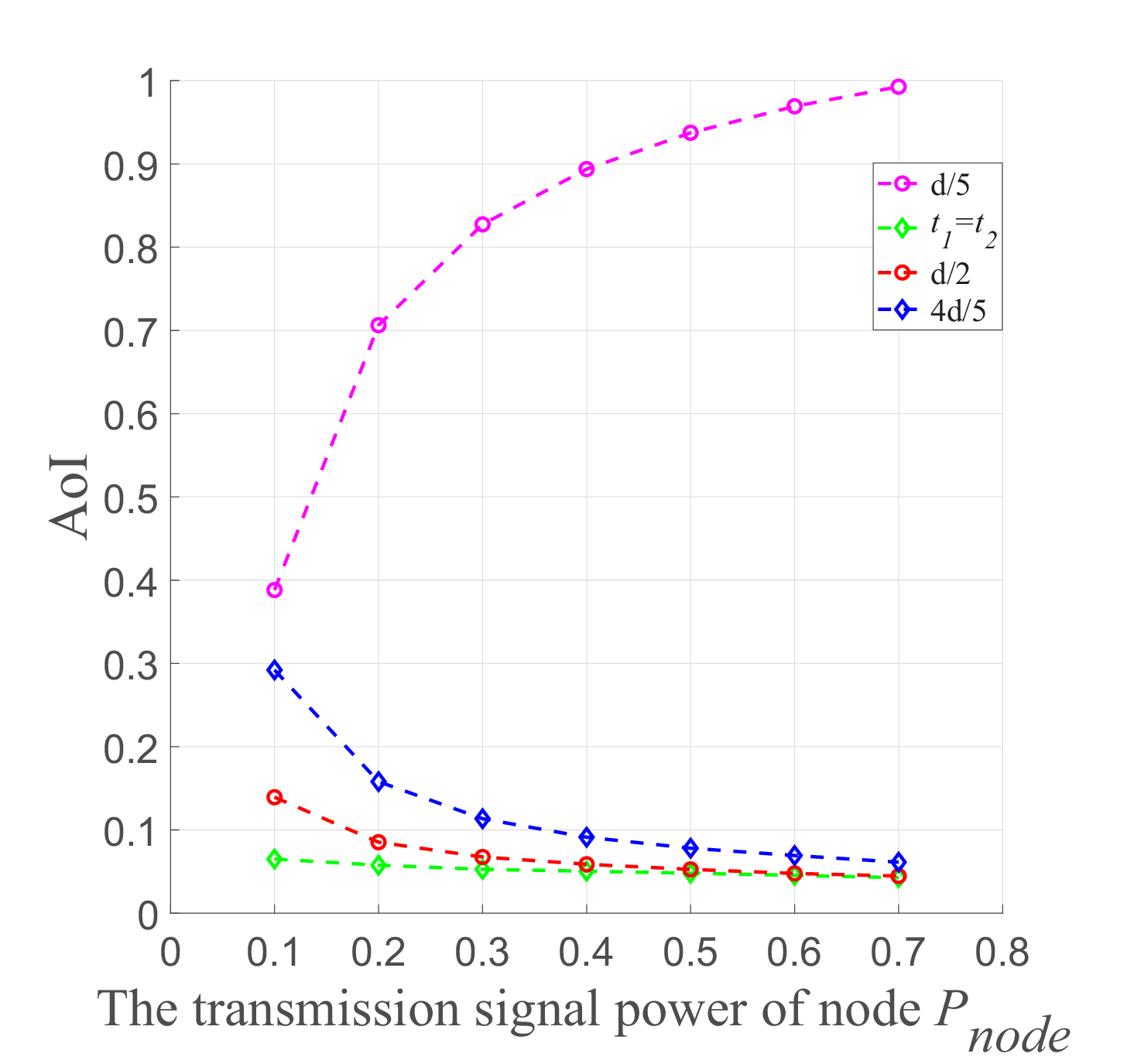}
    	\caption{Respect to different $P_{node}, P_{relay}=1, d=1000$, $l$ ranges in $(d/5,t_1=t_2,d/2,4d/5)$.}
    	\label{sim:fig6b}
    \end{subfigure}%
    \begin{subfigure}{0.33\textwidth}
    	\centering
    	\includegraphics[width=\linewidth]{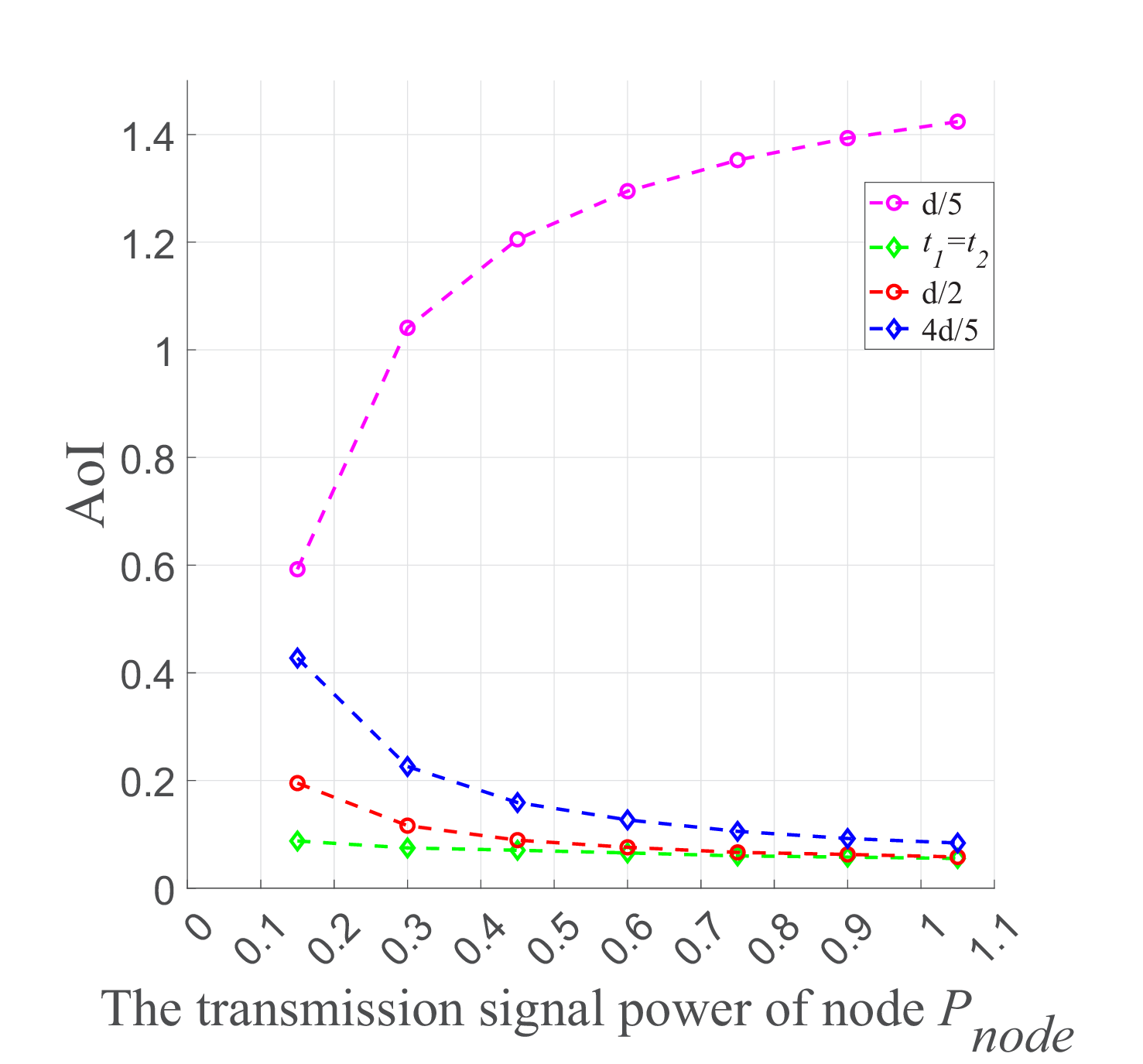}
    	\caption{Respect to different $P_{node}, P_{relay}=1.5, d=1500$, $l$ ranges in $(d/5,t_1=t_2,d/2,4d/5)$.}
    	\label{sim:fig6c}
    \end{subfigure}%
    \caption{Average Instant AoI respect to different $P_{node}, P_{relay}, d$, $l$ ranges in $(d/5,t_1=t_2,d/2,4d/5)$.}
    \label{sim:aoi2}
\vspace{-10pt}
\end{figure*}
\subsection{Results in Evaluation}
Fig.~\ref{sim:aoi1} presents a high degree of consistency between the experimental results and the mathematical reasoning, whether it is network delay or instant AoI. This undoubtedly positively demonstrates the reliability of our mathematical reasoning. Obviously, locating the relay at $t_1=t_2$ optimizes both the average network delay and the instant AoI.
Note that as the relay node is positioned closer to either the source or destination node, performance in terms of network delay and AoI significantly degrades. This is due to the weakening of the transmission signal over longer distances, which reduces data transmission rates. 

If the relay node is too close to the source node, although transmission efficiency in the first phase ({\it from the source to the relay}) improves, the efficiency in the second phase ({\it from the relay to the destination}) significantly decreases, causing the data packets transmitted in the first stage to be sent out faster and the data packets in the second stage to take longer to reach the destination, ultimately leading to worse network latency and AoI.

Similarly, if the relay node is positioned too close to the destination node, the transmission efficiency in the first phase ({\it from the source to the relay}) will be poor. Although the second-phase transmission ({\it from the relay to the destination}) becomes more efficient, it cannot compensate for the inefficiencies in the first phase, leading to deteriorated performance in terms of both the network delay and AoI.

Fig.~\ref{sim:aoi2} shows the performance of the optimal location on average instantaneous AoI
with respect to different $P_{node}$, $P_{relay}$, $d$, and $l$ ranges in ($d/5, t1 = t2, d/2, 4d/5$).
It is evident that the optimal location we extracted is indeed the best performing. 

It should be noted that when the relay is positioned at $d/5$ and the transmission signal power of the source node gradually increases, the AoI exhibits a rising trend. This is because the relay at this location is too close to the source node, and as the transmission signal power of the source node increases, data packets are transmitted more quickly. 
However, the longer transmission distance from the relay to the destination node causes the data packets output by the source node to wait. Meanwhile, the relay's transmission capacity does not substantially improve, resulting in increased waiting times for packets at the relay, which in turn leads to less fresh information. When the relay is located at other positions, this phenomenon of rapid packet transmission followed by long waiting times is less pronounced, and thus the AoI does not increase as the transmission signal power of the source node rises. This pattern can also be observed in Fig.~\ref{sim:aoi3} and Fig.~\ref{sim:aoi4}.

At the same time, we observed that the AoI at the relay position $d/2$ and the point where $t_1=t_2$ gradually converge as the transmission signal power of the source node increases. This is because, when the conditions of the first and second stages are similar, the location where $t_1=t_2$ aligns with the geographical midpoint.

As mentioned above, we assume that the noise level is uniform across each link, where the signal intensity of the white noise and the noise power are both set to $1\times 10^{-10}$. Next, we will check the cases where the noise levels
differ between links.

In the case where $N_1 > N_2$, the location that satisfies $t_1=t_2$ will be closer to the source node. As observed in Fig.~\ref{sim:aoi3},
even with added variables, the optimal location performs the best in terms of average instantaneous AoI. It is worth noting that, in this case, when the relay is positioned near the source node at $d/5$ and the transmission signal power of the source node is relatively low, the AoI performance is close to that of the optimal position and better than at other locations. This also indirectly reflects that the AoI performance is mathematically a continuous function.
\begin{figure*}
\centering
\begin{subfigure}{0.4\textwidth}
	\centering
	\includegraphics[width=\linewidth]{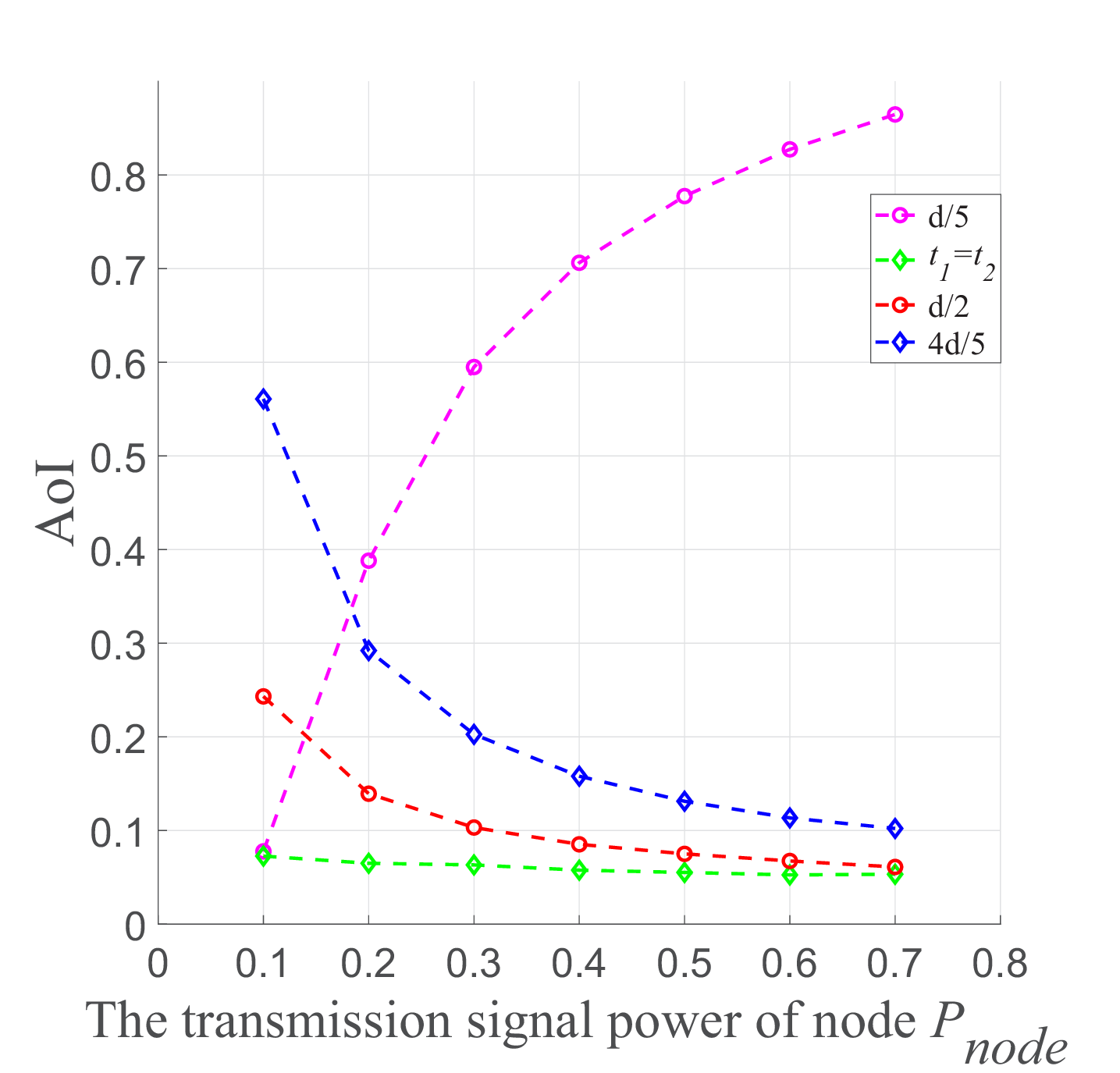}
	\caption{Respect to different $P_{node}, P_{relay}=1, d=1000$, $l$ ranges in $(d/5,t_1=t_2,d/2,4d/5)$.}
	\label{sim:fig7a}
\end{subfigure}%
\begin{subfigure}{0.4\textwidth}
	\centering
	\includegraphics[width=\linewidth]{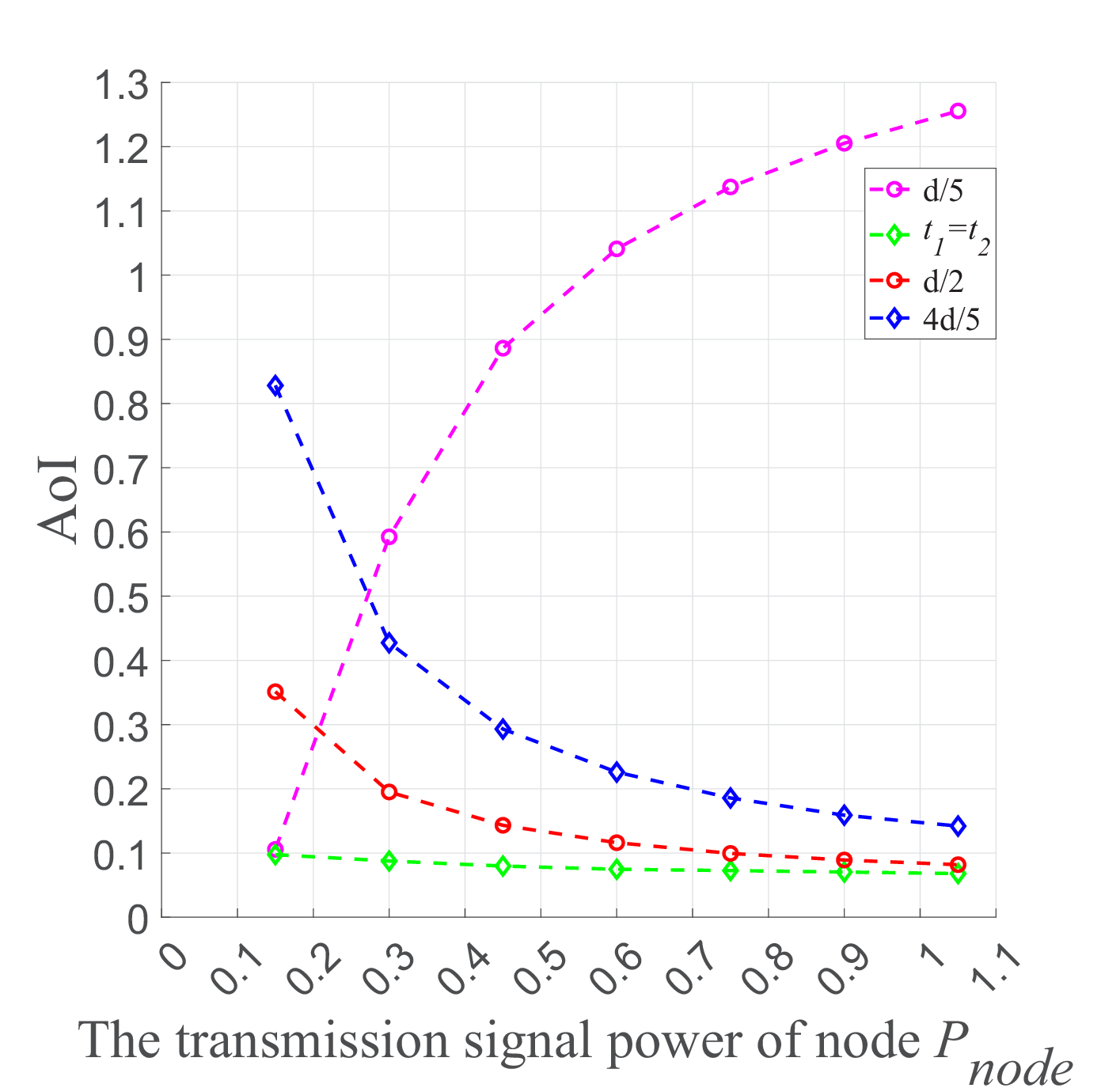}
	\caption{Respect to different $P_{node}, P_{relay}=1.5, d=1500$, $l$ ranges in $(d/5,t_1=t_2,d/2,4d/5)$.}
	\label{sim:fig7b}
\end{subfigure}%
\caption{Average Instant AoI respect to different $P_{node}, P_{relay}, d$, $l$ ranges in $(d/5,t_1=t_2,d/2,4d/5)$ with $N_1>N_2$.}
\label{sim:aoi3}
\vspace{-10pt}
\end{figure*}

\begin{figure*}
\centering
\begin{subfigure}{0.4\textwidth}
	\centering
	\includegraphics[width=\linewidth]{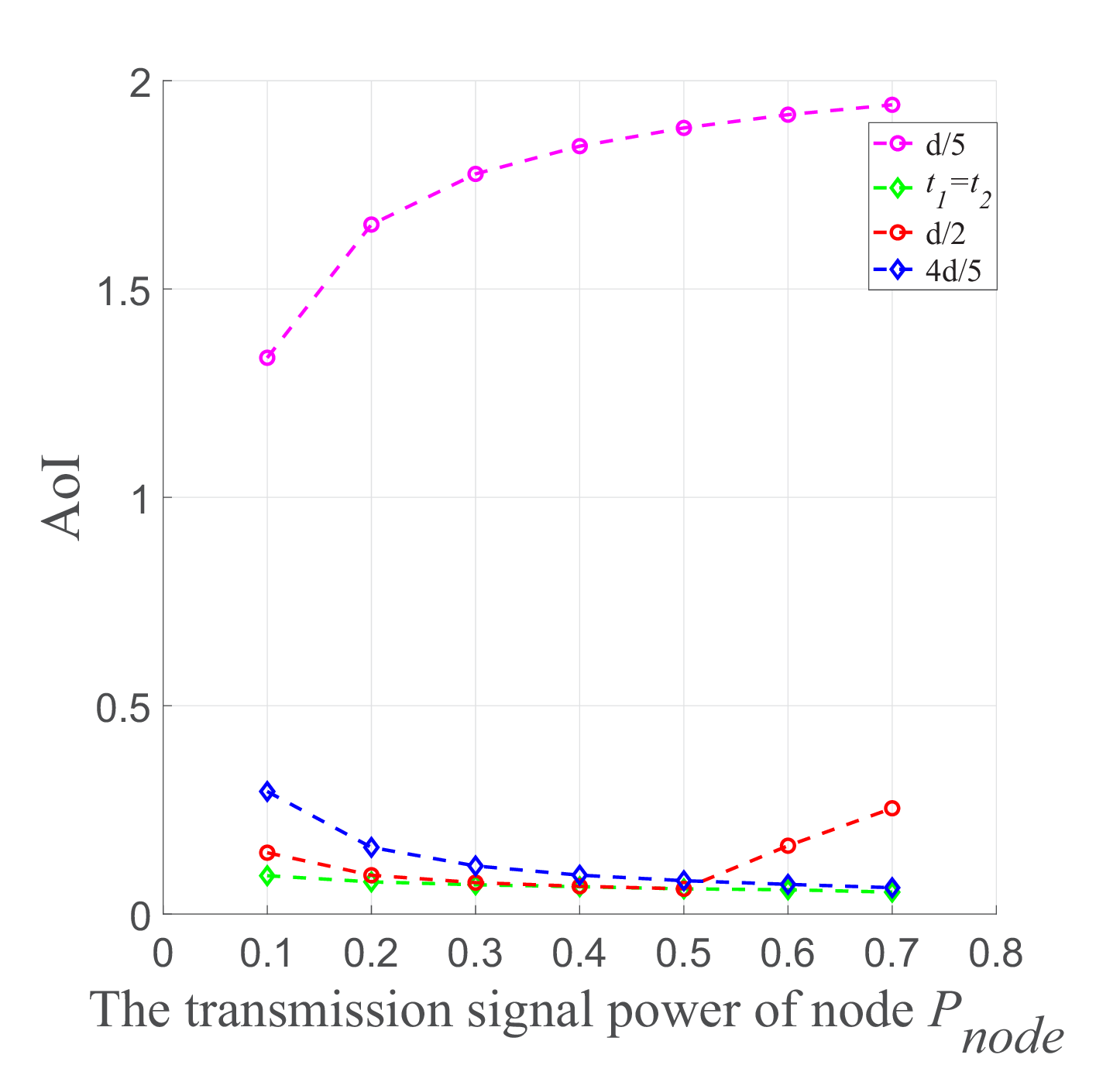}
	\caption{Respect to different $P_{node}, P_{relay}=1, d=1000$, $l$ ranges in $(d/5,t_1=t_2,d/2,4d/5)$.}
	\label{sim:fig8a}
\end{subfigure}%
\begin{subfigure}{0.4\textwidth}
	\centering
	\includegraphics[width=\linewidth]{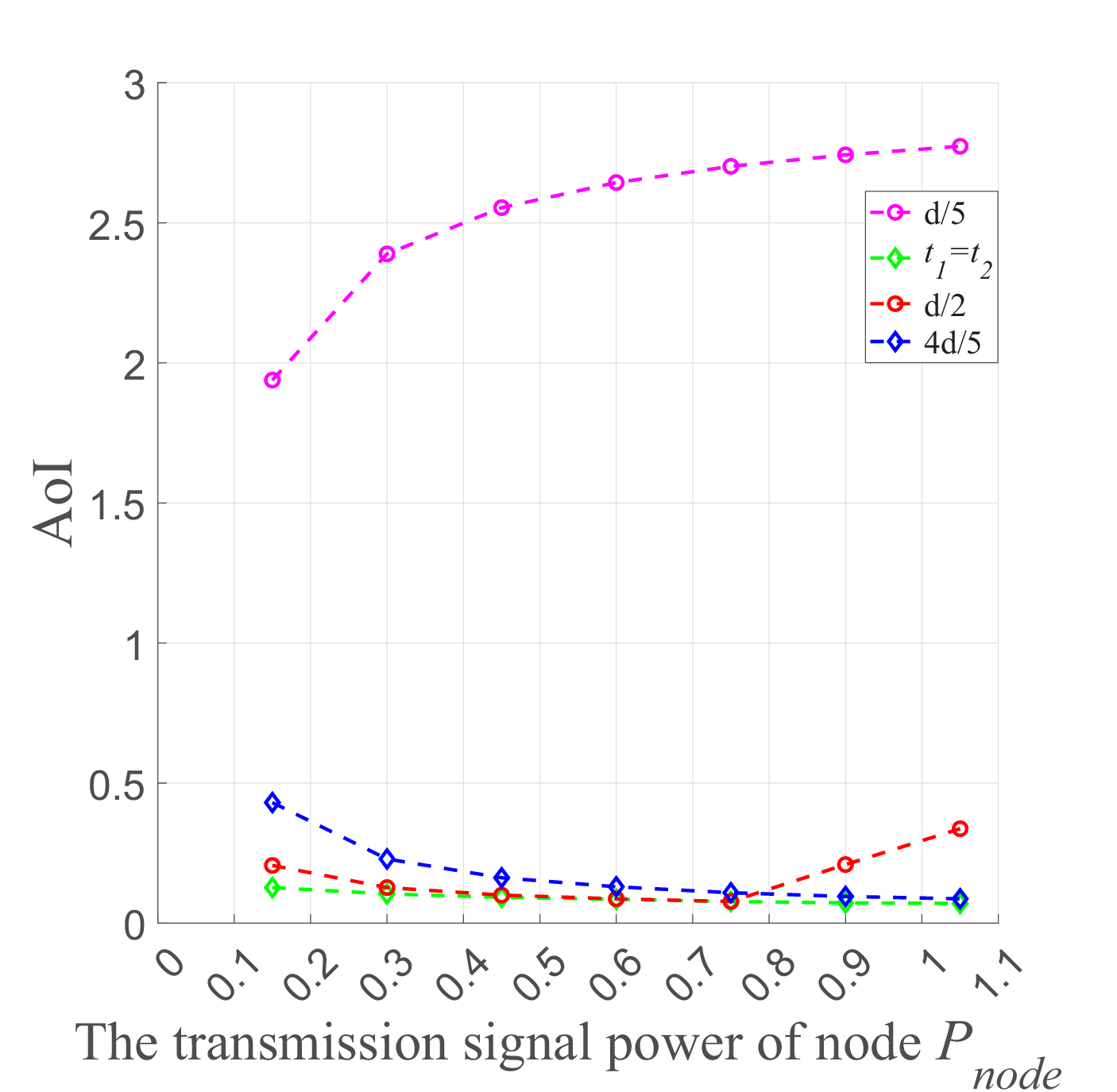}
	\caption{Respect to different $P_{node}, P_{relay}=1.5, d=1500$, $l$ ranges in $(d/5,t_1=t_2,d/2,4d/5)$.}
	\label{sim:fig8b}
\end{subfigure}%
\caption{Average Instant AoI respect to different $P_{node}, P_{relay}, d$, $l$ ranges in $(d/5,t_1=t_2,d/2,4d/5)$ with $N_1<N_2$.}
\label{sim:aoi4}
\vspace{-10pt}
\end{figure*}

In the case where $N_1 < N_2$, the location that satisfies $t_1=t_2$ will be closer to the destination node as $\frac{P_{node}}{P_{relay}}$ approaches ``\emph{1}''. Because $N_1 < N_2$, we need to shorten the distance between the relay and the destination node to balance the parameter's effect and determine the optimal location. It is shown in Fig.~\ref{sim:aoi4} that the AoI at the $d/2$ location gradually rises because, as the relay moves, the location at $d/2$ satisfies $t_1<t_2$, and its AoI increases instead in the case of $t_1<t_2$ as shown in (\ref{aoi_2}).

All cases demonstrate that the derived relay location consistently guarantees the optimal AoI and outperforms other schemes, as expected.

\section{Conclusions}\label{sec:conclusion}
In this paper, we focus on the AoI optimization problem in a typical post-disaster multi-hop wireless communication scenario,
where the data is transmitted via relays.
Based on the theoretical analysis,
we derive the analytical form of AoI,
and then derive the optimal AoI and its relay location.
Finally, through the extensive experiments,
our analysis conforms to the experimental results
and the optimal AoI is achieved.

Our findings, supported by mathematical reasoning and extensive simulations,
can reveal the importance of optimal relay placement in enhancing network performance (especially AoI).
The future research will focus on the multi-hop AoI optimization in network.

\section*{APPENDIX A  \\
}
\textbf{Step 1)}: In the case of $t_1>t_2$, we use the derivative to analyze the trend of (\ref{aoi_3}). 
Let $F(l)=$(\ref{aoi_3}), where $l$ denotes the location of the relay node (the distance between the source node and the relay). We have
\begin{align}
F(l)=\frac{3}{2}\frac{L}{B\log_{2}(1+(\frac{\lambda}{4\pi l})^2\frac{G_t\ol{G_r}P_{node}}{N})}\nonumber\\
+\frac{L}{B\log_{2}(1+(\frac{\lambda}{4\pi (d-l)})^2\frac{\ol{G_t}G_rP_{relay}}{N})},
\label{F(l)}
\end{align}

\begin{align}
F'(l)=\frac{3\frac{\omega P_{node}}{1+\psi \frac{P_{node}}{l^2}}}{[\log_{2}(1+\psi \frac{P_{node}}{l^2})]^2l^3} \nonumber\\
-\frac{2\frac{\omega P_{relay}}{1+\psi \frac{P_{relay}}{(d-l))^2}}}{[\log_{2}(1+\psi \frac{P_{relay}}{(d-l)^2})]^2(d-l)^3},
\end{align}

where $\omega=\frac{L\psi}{B\ln 2}$, $\psi=\frac{\lambda^2G_t\ol{G_r}}{(4\pi)^2N}=\frac{\lambda^2\ol{G_t}G_r}{(4\pi)^2N}$. Let $F'(l)>0$, we have
\begin{align}
\frac{3}{2}\frac{P_{node}(d-l)^3}{P_{relay}l^3} \geq \frac{(1+\psi \frac{P_{node}}{l^2})(\log_2(1+\psi \frac{P_{node}}{l^2}))^2}{(1+\psi \frac{P_{relay}}{(d-l))^2})(\log_2(1+\psi \frac{P_{relay}}{(d-l))^2}))^2}.
\label{eqn:noeq}
\end{align}
First, we need to know from (\ref{eqn:t1}) (\ref{eqn:t2}), and (\ref{eqn:d}) that when $t_2\uparrow$, $t_1\downarrow$, $l\downarrow$ (or $t_2\downarrow$, $t_1\uparrow$, $l\uparrow$) is always true. When $t_1=t_2$, the right side of (\ref{eqn:noeq}) is always equal to $1$. Once $t_1>t_2$, as $t_1\uparrow$ and $l\uparrow$, the right side of (\ref{eqn:noeq}) is always less than $1$. Thus
\begin{align}
log_2\frac{3}{2}+log_2\frac{P_{node}}{P_{relay}}+log_2\frac{(d-l)^3}{l^3} \geq \log_21 =0.
\label{eqn:noeq2}
\end{align}
Solve this inequality, we have
\begin{align}
l\leq \frac{d}{1+(\frac{2}{3}\frac{P_{relay}}{P_{node}})^\frac{1}{3}}.
\label{eqn:noeq3}
\end{align}
Then we can conclude that when $l$ satisfies the above formula, $F(l)$ is an increasing function, as $l$ increases, $F(l)$ also increases.

Next, we will prove that $\frac{d}{1+(\frac{2}{3}\frac{P_{relay}}{P_{node}})^\frac{1}{3}}$ is always greater than $l_{t_1=t_2}$ (the location when $t_1=t_2$). That is, on the coordinate axis, (\ref{eqn:noeq3}) is always in the increasing direction of $l_{t_1=t_2}$.

When $t_1=t_2$, it is true that:

\begin{align} \label{eqn:noeq4}
\frac{P_{node}}{l_{t_1=t_2}^2}=\frac{P_{relay}}{(d-l_{t_1=t_2})^2}.
\end{align}

For $P_{relay}>P_{node}$, $\beta=P_{relay}/P_{node}>1$, $d-l_{t_1=t_2}=l_{t_1=t_2}\sqrt{\beta}$.

It implies that
$d-l_{t_1=t_2}+l_{t_1=t_2}=l_{t_1=t_2}\sqrt{\beta}+l_{t_1=t_2}=l_{t_1=t_2}(1+\sqrt{\beta})=d$, we have
\begin{align}
l_{t_1=t_2}=\frac{d}{1+\sqrt{\beta}}.
\end{align}
$\sqrt{\beta}=\beta^{1/2}$, $(\frac{2}{3}\beta)^{1/3}=(\frac{2}{3})^{1/3}\beta^{1/3}$, when $\beta>1$, $\beta^{1/2}>\beta^{1/3}$, and $(\frac{2}{3})^{1/3}<1$ then $\beta^{1/2}>(\frac{2}{3})^{1/3}\beta^{1/3}$, we have $1+\beta^{1/2}>1+(\frac{2}{3}\beta)^{1/3}$, that is 
\begin{align}
\frac{d}{1+\sqrt{\beta}}<\frac{d}{1+(\frac{2}{3}\beta)^{1/3}}
\end{align}
And it is always true that $l_{t_1=t_2}<\frac{d}{1+(\frac{2}{3}\frac{P_{relay}}{P_{node}})^\frac{1}{3}}$, which means that when $l_{t_1=t_2}<l<\frac{d}{1+(\frac{2}{3}\frac{P_{relay}}{P_{node}})^\frac{1}{3}}$ ($t_1>t_2$), $F(l)$ is increasing, $l_{t_1=t_2}$ is a minimum point of $\Delta(T)$.

\textbf{Step 2)}:
Here we analyze the changing trends of the following two functions:
\begin{equation}
		\begin{aligned}
			t_1 = \frac{L}{B\log_2(1+(\frac{\lambda}{4\pi})^2\frac{G_t\ol{G_r}}{N}\frac{P_{node}}{l^2})}\nonumber,
		\end{aligned}
\end{equation}
\begin{equation}
		\begin{aligned}
			t_2 = \frac{L}{B\log_2(1+(\frac{\lambda}{4\pi})^2\frac{\ol{G_t}G_r}{N}\frac{P_{relay}}{(d-l))^2})}\nonumber.
		\end{aligned}
\end{equation}

We first analyze the changing trends of $f_1(x)=\frac{1}{x^2}$:
$f_1'(x)=- \frac{2}{x^3}$, $f_1^{2}(x)=\frac{6}{x^4}$. It can be seen that when $x\uparrow$, $f_1'(x)\uparrow$, $f_1(x)$ decelerating $\downarrow$. Conversely, when $x\downarrow$, $f_1'(x)\downarrow$, $f_1(x)$ accelerating $\uparrow$. Let $f_2(x)=\frac{1}{x}$: $f_2'(x)=- \frac{1}{x^2}$, $f_2^{2}(x)=\frac{2}{x^3}$. It can be seen that when $x\uparrow$, $f_2'(x)\uparrow$, $f_2(x)$ decelerating $\downarrow$. Conversely, when $x\downarrow$, $f_2'(x)\downarrow$ and $f_2(x)$ accelerating $\uparrow$. $f_3(x)=\log_2(1+x)$: $f_3'(x)=\frac{1}{\ln{2}}\frac{1}{(1+x)}$, $f_3^{2}(x)=\frac{-1}{\ln{2}}\frac{1}{(1+x)^2}$. It can be seen that when $x\uparrow$, $f_3'(x)\downarrow$ and $f_3(x)$ decelerating $\uparrow$. Conversely, when $x\downarrow$, $f_3'(x)\uparrow$ and $f_3(x)$ accelerating $\downarrow$.

It is clear that $t_1$ increases as $l$ increases and $t_2$ decreases as $l$ increases. We find that no matter if $t_1$ is accelerating in increase, constant in rate of increase, or decelerating in increase, and $t_2$ is accelerating in decrease, constant in rate of decrease, or decelerating in decrease, there are at most two cases of $F(l)$. One case is that $F(l)$ will gradually increase all the time, the other is that $F(l)$  first increases and then decreases. By combining \textbf{Step 1}, we obtained the total change trend of $F(l)$: (1) $\nearrow$ of all the time; (2) $\nearrow$ followed by $\searrow$.  

Secondly, we compare $F(l)$ when $t_1=t_2$ and $l \rightarrow d$. Due to (\ref{F(l)}), when $t_1=t_2$, we have 
\begin{align} \label{eqn:t1=t2:1}
F(l_{t_1=t_2})=\frac{5}{2}\frac{L}{B\log_{2}(1+(\frac{\lambda}{4\pi l_{t_1=t_2}})^2\frac{G_t\ol{G_r}P_{node}}{N})}.
\end{align}
Draw from (\ref{eqn:noeq4}), as $P_{relay}>P_{node}$, $l_{t_1=t_2}<\frac{d}{2}$, then
\begin{align} \label{eqn:t1=t2:2}
F(l_{t_1=t_2})<\frac{5}{2}\frac{L}{B\log_{2}(1+4(\frac{\lambda}{4\pi d})^2\frac{G_t\ol{G_r}P_{node}}{N})}.
\end{align}
When $l \rightarrow d$, we have  
\begin{align} \label{eqn:l=d}
F(l)_{l\rightarrow d}\triangleq \frac{3}{2}\frac{L}{B\log_{2}(1+(\frac{\lambda}{4\pi d})^2\frac{G_t\ol{G_r}P_{node}}{N})}.
\end{align}
Let $F(l_{t_1=t_2}) < F(l)_{l\rightarrow d}$, we get the following inequality:
\begin{align} \label{eqn:leq}
(1+4x)^{\frac{3}{5}}>1+x,
\end{align}
where $x=\frac{\lambda^2G_t\ol{G_r}P_{node}}{(4\pi d)^2N}$, and we know that $x$ is always $>0$. When $x>0$ and is small enough, (\ref{eqn:leq}) is true. When most of the parameters that make up $x$ are fixed, it depends on the changes in $P_{node}$, $d$, and $N$. After calculation, we found that $x$ is generally small enough under the control of $P_{node}$, $d$, and $N$, which are dynamic data. The lower bound of $F(l)_{l\rightarrow d}$ leads to $t_1=t_2\in\argmin_{t_1,t_2} \Delta(T)$. The proof is complete.

\ifCLASSOPTIONcaptionsoff
\newpage
\fi

\bibliographystyle{IEEEtran}
\bibliography{myReferences}

\end{document}